\documentclass[runningheads]{llncs}
\spnewtheorem{observation}{Observation}{}{}

\usepackage[colorlinks]{hyperref}
\usepackage[colorinlistoftodos]{todonotes}
\usepackage{verbatim}
\usepackage{wrapfig}
\usepackage{caption,subcaption}

\usepackage{lineno,hyperref,comment,color}
\usepackage{xcolor}
\usepackage{amsmath,graphicx,amsfonts,algo}
\usepackage{enumerate}
\usepackage{comment}
\usepackage{lineno}

\begin{document}

\title{The Hamiltonian Path Graph is Connected for Simple $s,t$ Paths in Rectangular Grid Graphs}
\titlerunning{Reconfiguring Hamiltonian Paths}
%
\author{Rahnuma Islam Nishat$^1$ \and
Venkatesh Srinivasan$^2$ \and Sue Whitesides$^2$}
\authorrunning{Nishat, Srinivasan, and Whitesides}
%
\institute{
Department of Computer Science, Toronto Metropolitan University, ON, Canada
\email{rnishat@ryerson.ca}
\and
Department of Computer Science, University of Victoria, BC, Canada
\email{\{srinivas,sue\}@uvic.ca}
}

\maketitle              

\begin{abstract}
A \emph{simple} $s,t$ path $P$ in a rectangular grid graph $\mathbb{G}$ is a Hamiltonian path from the top-left corner $s$ to the bottom-right corner $t$ such that each \emph{internal} subpath of $P$ with both endpoints $a$ and $b$ on the boundary of $\mathbb{G}$ has the minimum number of bends needed to travel from $a$ to $b$ (i.e., $0$, $1$, or $2$ bends, depending on whether  $a$ and $b$ are on opposite, adjacent, or the same side of the bounding rectangle). Here, we show that $P$ can be reconfigured to any other simple $s,t$ path of $\mathbb{G}$ by \emph{switching $2\times 2$ squares}, where at most ${5}|\mathbb{G}|/{4}$ such operations are required. Furthermore, each \emph{square-switch} is done in $O(1)$ time and keeps the resulting path in the same family of simple $s,t$ paths. Our reconfiguration result proves that the \emph{Hamiltonian path graph} $\cal{G}$ for simple $s,t$ paths is connected and has diameter at most  ${5}|\mathbb{G}|/{4}$ which is asymptotically tight.

\end{abstract}

\section{Introduction}\label{sec:intro}
\label{sec:intro}
An \emph{$m\times n$ rectangular grid graph} $\mathbb{G}$ is an induced subgraph of the infinite integer grid embedded on $m$ rows and $n$ columns.  The outer boundary of $\mathbb{G}$ is a rectangle $R_\mathbb{G}$ composed of four \emph{boundaries}: $\cal{E}$, $\cal{W}$, $\cal{N}$ and $\cal{S}$; the inner faces of $\mathbb{G}$ are $1\times 1$ grid cells. 
An \emph{$s,t$ Hamiltonian path} $P$ of $\mathbb{G}$ is a Hamiltonian path with endpoints at the top left and bottom right vertices $s$ and $t$ of $R_\mathbb{G}$. Path  $P$ is called \emph{simple} if each \emph{internal} subpath (i.e., a subpath of $P$ that starts and ends on the outer boundary of $\mathbb{G}$ and contains only vertices internal to $\mathbb{G}$ otherwise) contains the minimum possible number of bends. In other words, $P$ is simple if an internal subpath $P_{u,v}$ contains no bends when $u$ and $v$ are on opposite boundaries ($\cal{E}$ and $\cal{W}$, or $\cal{N}$ and $\cal{S}$), exactly one bend when they are on adjacent boundaries (e.g., $\cal{E}$ and $\cal{N}$ etc.), and two bends when they are on the same boundary.

The \emph{reconfiguration of simple paths in $\mathbb{G}$} asks the following question: given any two simple $s,t$ paths $P$ and $P'$ of $\mathbb{G}$, is there an \emph{operation}, preferably local to a small subgrid,  and a sequence of simple paths $P=P_0, P_1, \ldots , P'$ of $\mathbb{G}$ such that each path in the sequence can be obtained from the previous path by applying the operation? Alternately, suppose that we define the \emph{simple $s,t$ Hamiltonian path graph} of $\mathbb{G}$ with respect to an \emph{operation} as the graph $\mathcal{G}$, where each simple $s,t$ Hamiltonian path of $\mathbb{G}$ is represented by a vertex, and two vertices $u,v$ of $\mathcal{G}$ are connected by an edge if the defined \emph{operation} reconfigures the Hamiltonian path represented by the one to the other. Then, the reconfiguration problem for simple paths in $\mathbb{G}$ stated above asks whether $\cal{G}$ is connected with respect to the \emph{operation}. See Figure~\ref{fig:HP_graph}.
\begin{figure}[!ht]
\centering
\includegraphics[width=\textwidth]{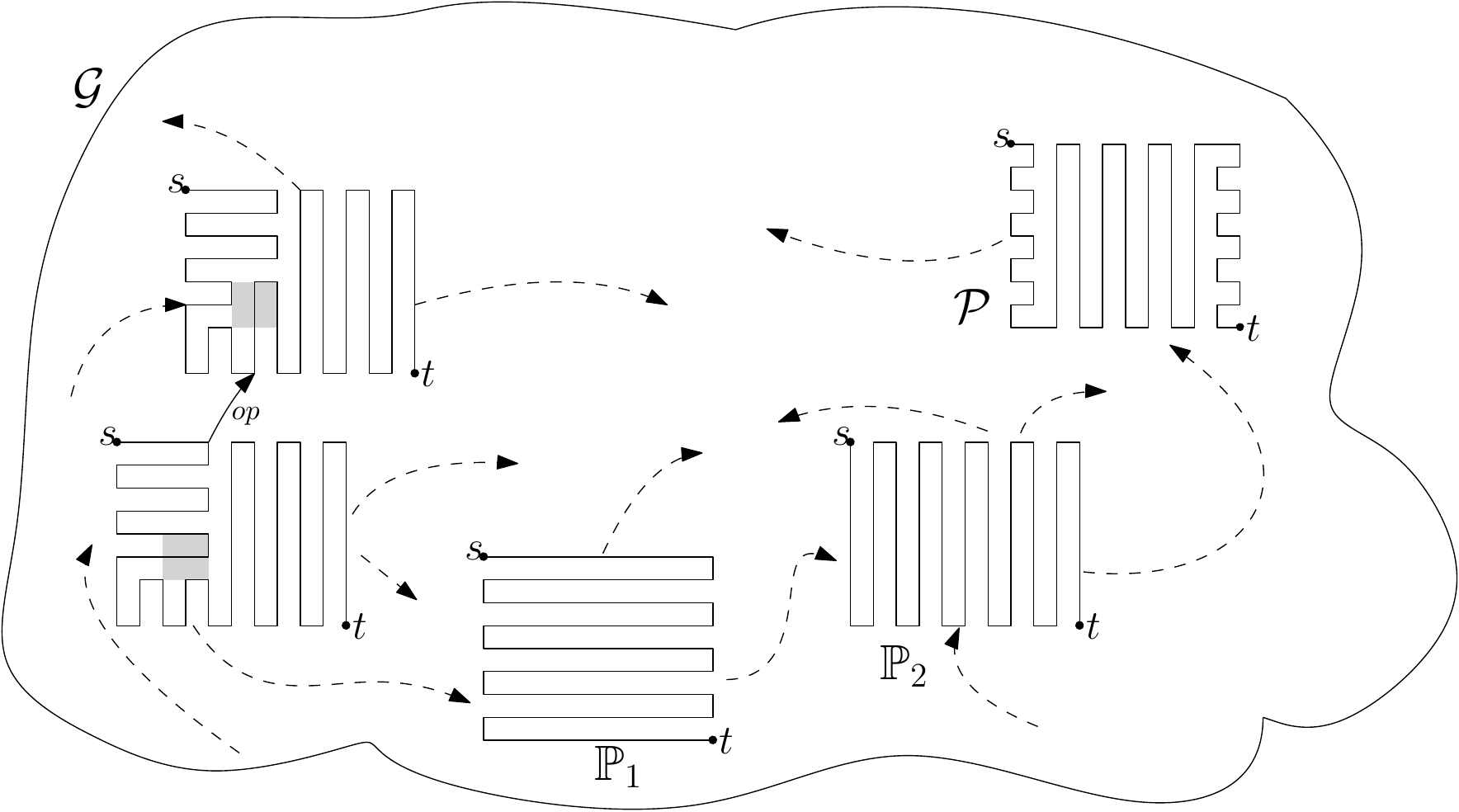}
\caption{The simple $s,t$ Hamiltonian path graph $\cal{G}$  with respect to the  \emph{square-switch} operation. Nodes $\mathbb{P}_1$, $\mathbb{P}_2$, and $\cal{P}$ represent canonical and almost canonical paths (defined in Section~\ref{sec:preliminaries}).}
\label{fig:HP_graph}
\end{figure}

In previous work~\cite{NSWIwoca21}, we provided a partial answer to this question.
We introduced simple $s,t$ paths in rectangular grid graphs in~\cite{NSWIwoca21} and gave a \emph{structure theorem} that we used to design an $O(|\mathbb{G}|)$ time algorithm to find a sequence of $s,t$ Hamiltonian paths between two given simple $s,t$ paths of $\mathbb{G}$. We used \emph{pairs of cell-switch} operations. 
See Figure~\ref{fig:cell_switch} for an example of a \emph{cell-switch}, which exchanges two parallel edges of $P$ on a cell for two non-edges of $P$ on that cell. However, our approach had two limitations: the intermediate paths in that sequence obtained by pairs of cell-switches were not necessarily simple, and the pair of cells that were switched at each step were not always close to each other in $\mathbb{G}$. In other words, the pair of switch operations was not \emph{local} in $\mathbb{G}$. 

\begin{figure}[!ht]
\centering
\includegraphics[width=0.8\textwidth]{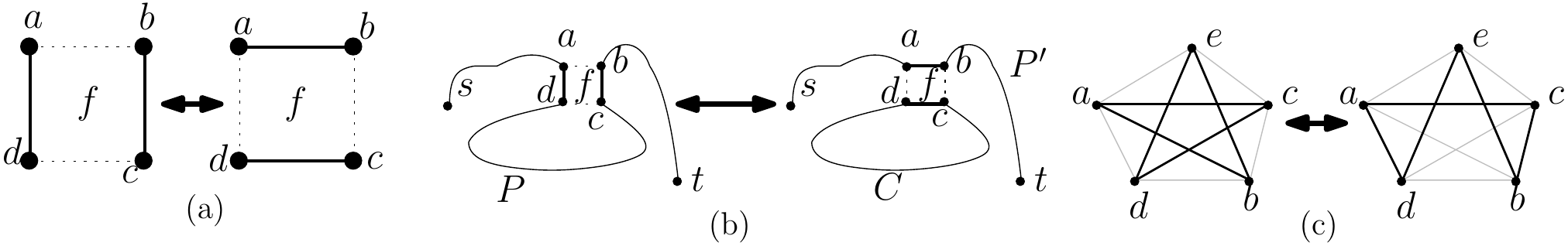}
\caption{(a) A cell-switch, (b) the cell-switch breaks an $s,t$ Hamiltonian path for $\mathbb{G}$ into a \emph{path-cycle cover} for $\mathbb{G}$ consisting of one cycle and one $s,t$ path. 
}
\label{fig:cell_switch}
\end{figure}

In this paper, we overcome these limitations and solve the reconfiguration problem for simple paths of $\mathbb{G}$ completely. We introduce a new local operation we call \emph{square-switch} or \emph{switching a square}. Briefly, in a square subgrid $sq$ consisting of 4 cells, the operation  exchanges four edges of $P$ for four non-edges of $P$, and leaves the other grid edges of $sq$ unchanged.  The four edges and non-edges of $P$ occur in two diagonally opposite cells of $sq$, and the square-switch can be viewed as switching these two cells as illustrated in Figure~\ref{fig:gen_transpose}(a) (other conditions apply; see  Section~\ref{sec:preliminaries} for details). 
\noindent
\begin{figure}[!ht]
\centering
\includegraphics[width=\textwidth]{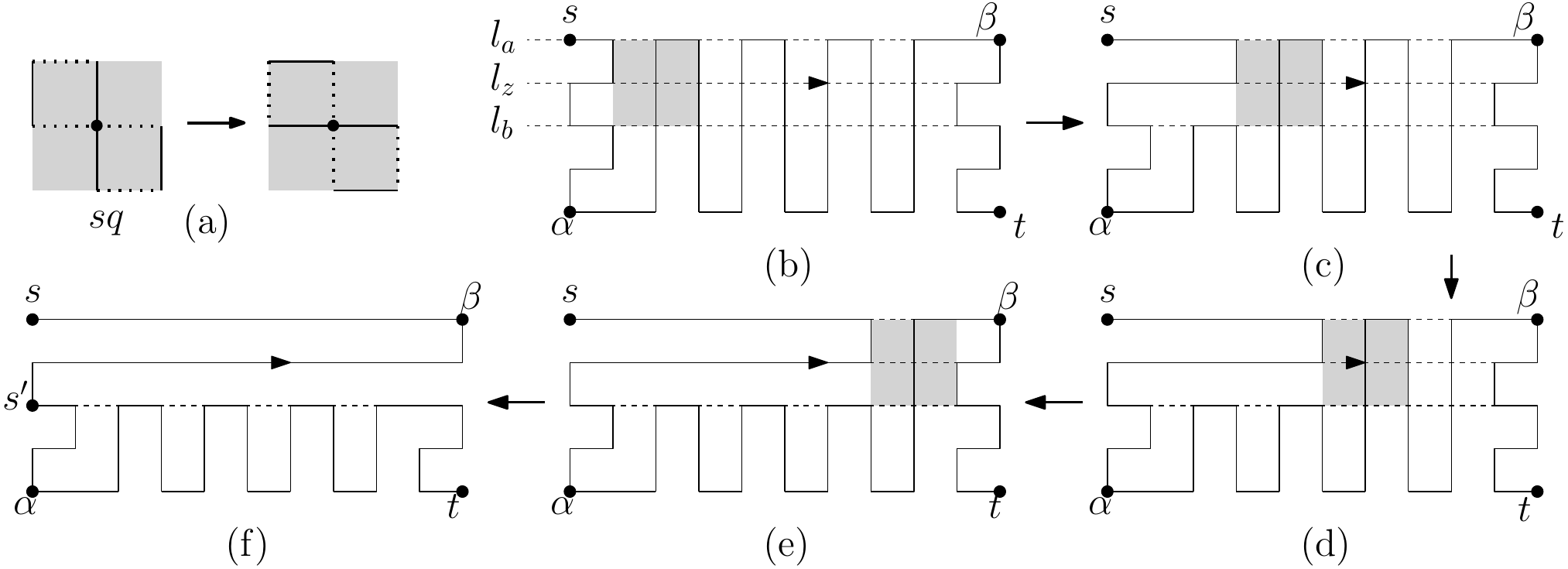}
\caption{(a) a square-switch on $sq$ 
(b)--(e) (read clockwise) using square-switches to make lines. 
}
\label{fig:gen_transpose} 
\end{figure}
We give an $O(|\mathbb{G}|)$ time algorithm to reconfigure any simple $s,t$ path of $\mathbb{G}$ to another such path using square-switches. The core ideas in our algorithm are shown in Figure~\ref{fig:gen_transpose}(b)-(f), where the simple path in (b) is transformed into the simple path in (f) using a sequence of square-switch operations.
Moreover, we show that our reconfiguration algorithm uses at most ${5}|\mathbb{G}|/{4}$ such operations. This implies that the \emph{diameter} of  $\mathcal{G}$ with respect to the square-switch operation (see Figure~\ref{fig:HP_graph}) is at most ${5}|\mathbb{G}|/{4}$.

 
 \noindent

\noindent
{\bf Our Contributions.} 
(1) We introduce a new  operation called \emph{square-switch}. A square-switch is a local operation on a small subgrid, only changing edges in the square. Our square-switch operation maintains Hamiltonicity after each square-switch in the reconfiguration. 
(2) We give a $O(|\mathbb{G}|)$ time algorithm that reconfigures a simple $s,t$ Hamiltonian path in a rectangular grid graph to another using ${5}|\mathbb{G}|/{4}$ square-switches in such a way that the intermediate paths remain simple after each square-switch. (3) We give an affirmative answer to the connectivity question for the simple $s,t$ Hamiltonian path graph, $\mathcal{G}$, of an $m \times n$ grid graph $\mathbb{G}$. Our algorithm provides a constructive proof that $\mathcal{G}$ has diameter at most ${5}|\mathbb{G}|/{4}$ which is asymptotically tight. 


\noindent

\smallskip


\noindent
{\bf Related Work and Applications.}
Reconfiguration problems have attracted attention for some time~\cite{ito,Nishimura_2018}.   Takaoka~\cite{Takaoka18} 
and Lignos~\cite{lignos} studied reconfiguration of Hamiltonian cycles in \emph{unembedded} graphs using \emph{switches}. However, for \emph{embedded} graphs, a single switch operation increases the number of components in the path-cycle cover of the graph and hence needs to be paired with another switch operation right after the first one to restore the number of the components. This observation led us to use pairs of cell-switches in reconfiguration of simple $s,t$ paths~\cite{NSWIwoca21} and \emph{$1$-complex $s,t$ paths}~\cite{NSWWalcom22}. 

Previously, Nishat and Whitesides studied reconfiguration of \emph{1-complex} Hamiltonian cycles in grid graphs  without holes~\cite{NishatW17,NishatW19,NishatThesis}, where each internal vertex is connected to a boundary of the grid graph with a single turn-free segment on the cycle. They used two operations \emph{flip} and \emph{transpose}, and showed that the \emph{Hamiltonian cycle graph} with respect to those two operations is connected for $1$-complex cycles in rectangular grids and $L$-shaped grid graphs. 

Apart from reconfiguration,  the complexity of finding Hamiltonian cycles and paths in grid graphs has been extensively studied~\cite{ItaiPS82,EverettMS,Keshavarz16,UmansL97,ruskey2003bent}, as well as various combinatorial aspects of the problem~\cite{Jacobsen,Pettersson14,Collins199729}, which has many possible application areas (e.g., in robot navigation~\cite{Gorbenko2012}, 3D printing~\cite{mullerHM}, and polymer science~\cite{bodroza}). The reconfiguration of paths and cycles has the potential to reduce turn costs and travel time and to increase navigation accuracy  (e.g.,~\cite{Arkin},~\cite{FellowsGKPRWY10},~\cite{Winter}).  

\section{Preliminaries}\label{sec:preliminaries}

In this section, we define the {\em square-switch} operation and show that it preserves Hamiltonicity. In  Sections~\ref{sec:sq_switch_and_zip} and \ref{sec:algorithm}, we show how a carefully chosen sequence of square-switch operations, called the {\em zip}, can be applied repeatedly to design a reconfiguration algorithm for simple $s,t$ paths. 
We start with basic terminology, some of which has been defined in ~\cite{NSWIwoca21}, and is repeated here for completeness. 

A \emph{simple path} always means a simple $s,t$ Hamiltonian path of $\mathbb{G}$; it visits each node of $\mathbb{G}$ exactly once and uses only edges in $\mathbb{G}$.  

A \emph{cell} of $\mathbb{G}$ is an internal face of $\mathbb{G}$. A vertex of $\mathbb{G}$ with coordinates $(x,y)$ is denoted by $v_{x,y}$, where $0 \le x \le n-1$ and $0 \le y \le m-1$. The top left corner vertex $s$ of $\mathbb{G}$ has coordinates $(0,0)$, and the positive $y$-direction is downward. 
We use the two terms \emph{node} and \emph{vertex} interchangeably. 
 
{\em Column} $x$ of $\mathbb{G}$ is the shortest path of $\mathbb{G}$ between $v_{x,0}$ and $v_{x,m-1}$, and  \emph{Row} $y$ is the shortest path  between $v_{0,y}$ and $v_{n-1, y}$. We call Columns $0$ and $n-1$ the \emph{west} ($\mathcal{W}$) and \emph{east} ($\mathcal{E})$ boundaries of $\mathbb{G}$, respectively, and Rows $0$ and $m-1$ the \emph{north} ($\mathcal{N}$) and \emph{south} ($\mathcal{S}$) boundaries.   

Let $P$ be a simple path of $\mathbb{G}$. The \emph{directed subpath}  of $P$ from vertex $u$ to $w$ is denoted by $P_{u,w}$. Straight subpaths are called segments, denoted $seg$[$u,v$], where $u$  and $v$ are the segment endpoints.  An {\emph{internal subpath} $P_{u,v}$ of $P$, defined in Section~\ref{sec:intro}, is called a \emph{cookie} if both $u,v$ are on the same boundary (i.e., $\mathcal{N}$, $\mathcal{S}$, $\mathcal{E}$, and $\mathcal{W}$); otherwise, $P_{u,v}$ is called a \emph{separator}. 


\smallskip

\noindent
\textbf{Cookies and Separators.}
A cookie can be an $\cal{E}, \cal{W}, \cal{N}, \cal{S}$ cookie, according to the boundary where the cookie has its end points.  A cookie $c$ is formed by three segments of $P$. The common length of the two parallel segments measures the \emph{size} of $c$. The boundary edge between the endpoints of $c$ is the \emph{base} of $c$, and it does not belong to $P$. 



\noindent
{\em \bf Assumption~\cite{NSWIwoca21}. }
Let $\alpha$ and $\beta$ denote the bottom left and top right corner vertices of $\mathbb{G}$. Without loss of generality, 
{\em we assume the input simple path $P_{s,t}$ visits $\alpha$ before $\beta$}. 
The target simple path for the reconfiguration as well as intermediate configurations may visit $\beta$ before $\alpha$. 


Since separators of $P$ have endpoints on distinct boundaries, there are two kinds: a \emph{corner separator} $\mu_i$ or $\nu_i$ has one bend, and a \emph{straight separator} $\eta _i$ has no bends. Traveling along $P_{s,t}$, we  denote the $i$-th straight separator we meet by $\eta_i$, where $1 \le i \le k$. The endpoints of $\eta_i$ are denoted $s(\eta_i)$ and $t(\eta_i)$, where $s(\eta_i)$ is the first  endpoint met. 
We say a corner separator {\em cuts off} a corner ($s$ or $t$). Traveling along $P_{s,t}$, we  denote the $i$-th corner separator cutting off $s$ by $\mu_i$, where $1 \le i \le j$. We denote its internal bend by $b(\mu_i)$, and its endpoints by $s(\mu_i)$ and $t(\mu_i)$, where $s(\mu_i)$ is the first  endpoint met. Similarly, we denote the $i$-th corner separator cutting off $t$ by $\nu _i$; endpoint $s(\nu_i)$ is met before $t(\nu_i)$. Corner separator $\nu_i$ has an internal bend at $b(\nu_i)$, where $1 \le i \le \ell$. A corner separator that has one of its endpoints connected to $s$ or $t$ by a segment of $P$ is called a \emph{corner cookie}.  It is regarded as a cookie and not counted as a corner separator. 
We have $j$ corner separators $\mu_i$ cutting off $s$, and $k$ straight separators $\eta_i$ where $k$ must be odd (see \cite{NSWIwoca21}), and $\ell$ corner separators $\nu _i$ cutting off $t$.

\smallskip
\noindent
\textbf{Square-Switch Operation.}
A zipline $l_{z}^{q_1,q_2}$ is an \emph{internal} gridline (i.e., a row or a column that is not a boundary) directed from endpoint $q_1$ to the other endpoint $q_2$. Since $l_z$ is not a boundary, it has two adjacent and parallel grid lines, which we denote by $l_a$ = [$a_1, a_2$] and $l_b$ = [$b_1, b_2$], where $a_1$ and $b_1$ are adjacent to $q_1$ on a boundary of $\mathbb{G}$ and $a_2$ and $b_2$ are adjacent to $q_2$ on the opposite boundary. We call the rectangular region enclosed by $l_a$ and $l_z$  the \emph{main track} $tr$, and the rectangular region enclosed by $l_z$ and $l_b$ the \emph{side track} $tr'$.


Let $P$ be a simple $s,t$ path of $\mathbb{G}$. A cell $c$ of $\mathbb{G}$ is \emph{switchable with respect to $P$} if two parallel sides of $c$ lie on $P$ and the other two parallel sides of $c$ are \emph{non-edges} of $P$. A cell-switch of such a cell is illustrated in Figure~\ref{fig:cell_switch}. Let $l_z$ be a zipline of $\mathbb{G}$. A \emph{square}  $sq_{x,y}$  is a set of $4$ cells of $\mathbb{G}$ that share a common vertex $v_{x,y}$ at the center of $sq$. Coordinates may be dropped when the meaning is clear. We say square $sq=sq_{x,y}$ is \emph{on the zipline} $l_z$ if $v_{x,y}$ is on $l_z$. 
 For $sq$ on $l_z$ we assign   local names to the nodes of $sq$, denoting the nodes on $l_a$ as $p_1$, $p_2$, $p_3$, with index increasing along directed line $l_a$.  We name the nodes of $sq$ on $l_z$ similarly,  labelling the center of $sq$ as  $p_5$, and then continue for the nodes on $l_b$ as in Figure~\ref{fig:switch_before_after} (later used to illustrate Definition~\ref{def:square_switch} below). Walking along directed $l_z$ from $q_1$ to $q_2$, we have the \emph{left} or \emph{right} side of the zipline; \emph{near} side is closer to $q_1$ than $q_2$, and \emph{far} side is closer to $q_2$. Based on this terminology, we denote by $c_{nl}$, $c_{nr}$, $c_{fl}$, and $c_{fr}$ the near-left, near-right, far-left, and the far-right cell of $sq$, respectively. 

\begin{definition}[Switchable square for $P$ and its switch operation]\label{def:square_switch}

\noindent 
 \color{black}  
Let $sq$ be a square on zipline $l_z$. Square $sq$ is \emph{switchable for $P$} if: (i) the far cell in $tr'$ and the near cell in $tr$ are switchable, and  (ii) switching the far cell in $tr'$ creates a path-cycle cover comprising a cycle in $tr$ and an $s,t$ path  $p_{s,t}$ through the nodes of $\mathbb{G}$ not on the cycle. 
A \emph{square-switch} of such a square $sq$ is the operation that  exchanges the four edges of $P$ in the far cell of $tr'$ and the near cell of $tr$ for the four non-edges of $P$ in those cells. 

\begin{figure}[!ht]
    \centering
    \includegraphics[width=\textwidth]{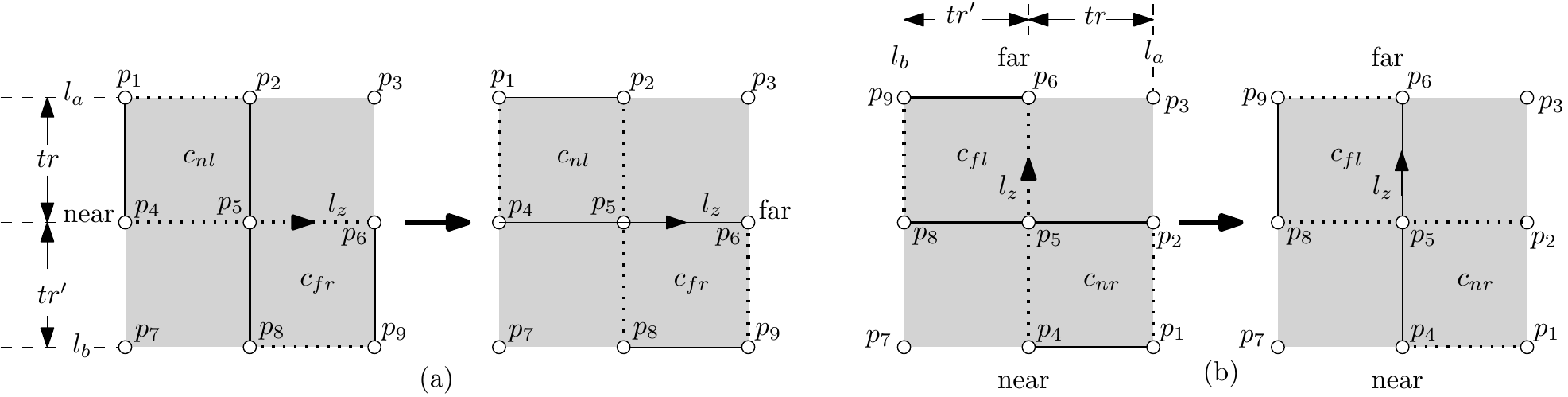}
    \caption{Square-switch for $sq$ on (a) horizontal $l_z$, where $c_{fr}$ is the far cell of $tr'$ and (b) vertical $l_z$, where $c_{fl}$ is the far cell of $tr'$. 
    }   
    \label{fig:switch_before_after}
    \end{figure}

\end{definition} 

Figure~\ref{fig:switch_before_after}(a),(b) shows a square $sq$ on $l_z$ before and after a square-switch for two different orientations  of the zipline. We only perform a square-switch on switchable squares. By Definition~\ref{def:square_switch}(ii),  we must have the edge $(p_2, p_3)$ not shown in the figure to form a part of the cycle in $tr$ both for horizontal (part (a)) and vertical zipline (part (b)).


The following observation shows that square-switch, when applied to switchable squares meeting the criteria of Definition~\ref{def:square_switch}, preserves Hamiltonicity.  However, to achieve our reconfiguration goal, we must later take care to use square-switch in a way that keeps the path simple after each switch. 

\begin{observation}[Square-switch Hamiltonicity]\label{obs:sq_switch_hamiltonicity}
Let $sq$ be a switchable square for $P$ on zipline $l_z$.  Then performing square-switch on $sq$ yields a new $s,t$ Hamiltonian path $P'$.
\end{observation}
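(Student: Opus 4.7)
The plan is to view the square-switch as a composition of two cell-switches performed in sequence---first on $c_1$, the far cell of $tr'$, and then on $c_2$, the near cell of $tr$---and to track how the path-cycle cover evolves under each. Since the square-switch by definition exchanges exactly the four $P$-edges of $c_1$ and $c_2$ for the four corresponding non-$P$-edges, and since $c_1$ and $c_2$ are diagonally opposite in $sq$ (meeting only at the center vertex $p_5$, and thus sharing no edges), this sequential decomposition accounts for every edge change performed by the square-switch.

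For the first step, condition~(i) of Definition~\ref{def:square_switch} ensures that $c_1$ is switchable for $P$, so the cell-switch on $c_1$ is well-defined; condition~(ii) then tells us that the resulting structure is a path-cycle cover $\mathcal{F}$ consisting of a cycle $C$ lying entirely in the strip $tr$ together with an $s,t$ path $p_{s,t}$ through the vertices of $\mathbb{G}\setminus V(C)$. For the second step, because $c_1$ and $c_2$ share no edges, the first cell-switch leaves every edge of $c_2$ unchanged; consequently $c_2$ still has two parallel on-cover edges and two parallel off-cover edges in $\mathcal{F}$, so the cell-switch on $c_2$ is well-defined.

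The main obstacle is to show that this second cell-switch merges $C$ and $p_{s,t}$ into a single $s,t$ Hamiltonian path. My plan is to prove the structural claim that, in the cover $\mathcal{F}$, exactly one of the two parallel on-cover edges of $c_2$ belongs to the cycle $C$ while the other belongs to the path $p_{s,t}$. Under this claim, the cell-switch on $c_2$ deletes one edge of $C$ (turning it into a path) and one edge of $p_{s,t}$ (splitting it into two subpaths), then inserts two perpendicular edges that splice the three resulting fragments at the corners of $c_2$ into a single path from $s$ to $t$ on the same vertex set as $C \cup p_{s,t}$, i.e., a Hamiltonian $s,t$ path $P'$. I would establish the structural claim by case analysis on the orientation of $P$'s parallel edges inside $c_1$, using the fact that the cycle $C$ produced in the first step necessarily contains the new edge on $l_z$ incident to $p_5$ and hence a short subpath of $P$ in $tr$ joining $p_5$ to $p_6$; this pins down which of $c_2$'s on-cover edges must lie on $C$ and which on $p_{s,t}$.
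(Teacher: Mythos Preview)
Your approach matches the paper's: decompose the square-switch into the cell-switch on the far cell of $tr'$ followed by the one on the near cell of $tr$, then argue that the second switch splices the cycle in $tr$ back into $p_{s,t}$. One small note: condition~(ii) already forces the orientation of the parallel edges in $c_1$ (since $(p_5,p_6)$ must be the non-edge of $P$ on $l_z$ that becomes the cycle-closing edge), so no genuine case split is needed there---the paper simply reads the configuration from its labelled figure and observes that $(p_2,p_5)$ lies on the cycle while $(p_1,p_4)$ lies on $p_{s,t}$, which is exactly your structural claim.
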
 
\begin{proof}
By condition (ii) of Definition~\ref{def:square_switch}, 
$P$ contains a subpath in $tr$ that joins $p_5$ and $p_6$, where grid edge $(p_5 , p_6 )$ is a non-edge of $P$ on $l_z$.  The square-switch of $sq$ can be thought of as carried out in two steps: first exchange the two edges of $P$ in the far cell of $tr'$ for the two non-edges in that cell, and then exchange the edges and non-edges of the near cell of $tr$. The first step creates a cycle in $tr$ by turning grid edge  $(p_5,p_6)$ on $l_z$ into an edge, and it  also replaces the subpath of $P$ connecting the endpoints $p_8$ and $p_9$  with an edge, resulting in a path $p_{s,t}$ containing the remaining nodes of $\mathbb{G}$.   The second step, the switch of the near cell in $tr$,  breaks the cycle and replaces the edge $(p_1,p_4)$ of $p_{s,t}$ with a subpath joining its endpoints $p_1$ and $p_4$.
\qed
\end{proof}

We conclude this section with definitions of two special types of simple paths that will be used by our reconfiguration algorithm.

\smallskip

\noindent
\textbf{Canonical Paths. }A  \emph{canonical path} is a simple path $P$ with no bends at internal vertices. If $m$ is odd, $P$ can be $\mathcal{E}$-$\mathcal{W}$ and fill rows of $\mathbb{G}$ one by one; if $n$ is odd, $P$ can be $\mathcal{N}$-$\mathcal{S}$ and fill columns. See the nodes $\mathbb{P}_1$ and $\mathbb{P}_2$ in Figure~\ref{fig:HP_graph}. There are no other types. 

\noindent
\textbf{Almost Canonical Paths. } A simple $s.t$ Hamiltonian path is said to be {\em almost canonical} if it is not canonical, and contains straight separators in Columns $2$ to $n-3$, or in Rows $2$ to $m-3$. By definition, an almost canonical path must have at least one of the following: unit size $\cal{W}$ cookies covering the $\cal{W}$ boundary, or unit size $\cal{E}$ cookies covering $\cal{E}$. See the node $\mathcal {P}$ in  Figure~\ref{fig:HP_graph} for an example that contains both.

\smallskip

\section{Square-switches and the zip operation}\label{sec:sq_switch_and_zip}

In this section, we show how we use the square-switch operation to reconfigure simple $s,t$ paths. We define a \emph{zip} operation, which is a sequence of square-switches for squares on a directed zipline $l_z$, where the centers of the squares occur at every other position on $l_z$. We prove that switching these squares in order of occurrence along an  interval of $l_z$ produces a new simple path after each square switch. We first describe zip for a special case, where the input path is almost canonical and we want to reconfigure it to a canonical path (Section~\ref{subsec:almost_canonical}). We then discuss the more general case, where the input path is neither canonical nor almost canonical and we want to reconfigure it to a canonical or almost canonical form (Section~\ref{subsec:not_almost_canonical}). 

\subsection{$P$ almost canonical}
\label{subsec:almost_canonical}

Let $P$ be an almost canonical path of $\mathbb{G}$ that visits $\alpha$ before $\beta$. Then $P$ must have either unit $\cal{W}$ cookies, or unit $\cal{E}$ cookies, or unit size cookies on both $\cal{E}$ and $\cal{W}$ boundaries.
In this section, we show how we can reconfigure $P$ to an $\cal{E}$--$\cal{W}$ canonical path by switching squares such that each such operation gives a simple $s,t$ path of $\mathbb{G}$. See Figure~\ref{fig:gen_transpose}(b)--(f).

We take Row $1$ as the zipline $l_z$ directed from $\cal{W}$ to $\cal{E}$, $l_a$ is the $\cal{N}$ boundary and $l_b$ is Row $2$. Walking on $l_z$ from $q_1$ on the $\cal{W}$ boundary to $q_2$ on the $\cal{E}$, we define the first switchable square with respect to $P$ to have center on $\eta_1$, and denote the square by $sq(\eta_1)$. The next switchable square $sq(\eta_3)$ on $l_z$ has center on $\eta_3$, and so on. We show that each of the squares $sq(\eta_i)$, $1\le i \le k$ and $i$ odd, is switchable with respect to $P$.

\begin{lemma}
\label{lem:almost_can_switchable}
Let $P$ be an almost canonical path of $\mathbb{G}$ visiting $\alpha$ before $\beta$, and let $l_z$ and $l_a$ be the Rows $1$ and $0$, respectively. Then each of the squares $sq(\eta_i)$, $1\le i \le k$ and $i$ odd, is switchable with respect to $P$.
\end{lemma}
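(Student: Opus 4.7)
The plan is to verify directly the two conditions of Definition~\ref{def:square_switch} for each $sq(\eta_i)$ with odd $i$, using the known structure of almost canonical paths from~\cite{NSWIwoca21}.

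First, I would pin down the local edge pattern of $P$ in the $3\times 3$ patch of vertices $p_1,\ldots,p_9$ centered at $p_5 = \eta_i \cap l_z$. The structure of almost canonical paths guarantees that the straight separators $\eta_j$ appear as vertical columns (in Columns $2$ to $n-3$) of alternating traversal direction, that transitions between consecutive separators of opposite parity occur near Rows $0,1$ (for the pair $\eta_{2j},\eta_{2j+1}$) or Rows $m-2,m-1$ (for $\eta_{2j-1},\eta_{2j}$), and that all remaining columns of $\mathbb{G}$ are covered by unit $\cal{W}$ and/or $\cal{E}$ cookies stacked along the side boundaries. From this I can read off that $(p_2,p_5)$ and $(p_5,p_8)$ lie on $\eta_i$ and hence on $P$, that $(p_4,p_5)$ and $(p_5,p_6)$ are non-edges of $P$, and that the $P$-edges at $p_1,p_3,p_4,p_6,p_7,p_9$ are forced by the cookie/separator pattern immediately to the west and east of $\eta_i$.

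Given this local picture, condition (i) of Definition~\ref{def:square_switch} reduces to checking that $c_{nl}$ (the NW cell of $sq(\eta_i)$, which is the near cell in $tr$) and $c_{fr}$ (the SE cell, which is the far cell in $tr'$) each have exactly one pair of parallel sides on $P$ and the complementary pair absent from $P$. Condition (ii) is then a short topological check: switching $c_{fr}$ exchanges its two $P$-edges for its two non-edges, and the newly inserted horizontal edge on $l_z$ closes a cycle whose remaining arcs lie entirely in the main track $tr$ (using the top portion of $\eta_i$ together with the connecting pattern in Row $0$), while the other newly inserted edge short-circuits the detour that previously descended $\eta_i$ and returned up through $p_9$, leaving an $s,t$ Hamiltonian path on the vertices not on the cycle.

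The main obstacle will be Step~1 -- the precise structural description of $P$ around $sq(\eta_i)$. It requires a small case split on whether $P$ has only $\cal{W}$ unit cookies, only $\cal{E}$ unit cookies, or both, and the extreme indices $i=1$ and $i=k$ (where $\eta_i$ abuts $s$ or $t$) will need separate treatment because the transition pattern there differs from the generic interior one. Once the local $P$-edge pattern in $sq(\eta_i)$ is pinned down in each case, conditions (i) and (ii) follow by direct inspection.
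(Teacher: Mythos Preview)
Your proposal is correct and follows essentially the same approach as the paper: identify $c_{nl}$ and $c_{fr}$ as the switchable pair, verify condition~(i) from the local separator/cookie pattern, and verify condition~(ii) by observing that switching $c_{fr}$ closes a $1\times 1$ cycle around $c_{fl}$ in $tr$ (a slightly larger cycle at $i=k$), with separate handling of the extreme indices $i=1$ and $i=k$. One small slip to fix when you carry out Step~1: since $P$ visits $\alpha$ before $\beta$, the separator $\eta_1$ runs $\cal{S}$ to $\cal{N}$, so it is the odd--even pairs $(\eta_{2j-1},\eta_{2j})$ that are joined on $\cal{N}$ (Row~$0$) and the even--odd pairs on $\cal{S}$, the reverse of what you wrote; this is exactly what makes $c_{fl}$ carry three edges of $P$ for odd~$i$.
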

\begin{proof}
We prove the claim by showing that for each $sq(\eta_i)$, the cells $c_{nl}$ in $tr$ and $c_{fr}$ in $tr'$ are the switchable cells, and switching $c_{fr}$ creates a path-cycle cover with a cycle in $tr$.

{\bf Case 1:} If $k=1$, then there is just one square $sq(\eta_1)$. The two vertical edges of $c_{nl}$ are contributed by $\eta_1$ and the unit $\cal{W}$ cookie in $tr$, or by $\eta_1$ and $seg[s,\alpha]$ on the $\cal{W}$ boundary. The two vertical edges in $c_{fr}$ are contributed by $\eta_1$ and the unit $\cal{E}$ cookie in $tr'$, or by $\eta_1$ and the segment $seg[\beta,t]$ on the $\cal{E}$ boundary. If there is no $\cal{E}$ cookie in $P$, then the cell $c_{fl}$ contains three edges of $P$, and switching $c_{fr}$ creates a $1\times 1$ cycle in $tr$ containing only the cell $c_{fl}$. Therefore, $sq(\eta_1)$ is switchable. Otherwise, switching $c_{fr}$ creates a cycle of two cells in $tr$ that contains $c_{fl}$ and goes through $\beta$, making $sq(\eta_1)$ switchable. 

{\bf Case 2:} If $k>1$, the cell $c_{nl}$ of $sq(\eta_1)$ and cell $c_{fr}$ of $sq(\eta_k)$ will be the same as Case 1. For $i<k$, cell $c_{fr}$ of $sq(\eta_i)$, $i$ odd, will be between cross separators $\eta_i$ and $\eta_{i+1}$ in $tr'$ and thus will be switchable; similarly cell $c_{fl}$ will be in track $tr$ between the same two cross separators that are connected by an edge on the $\cal{N}$. Therefore, switching $c_{fr}$ will create a $1 \times 1$ cycle in $tr$. Therefore, square $sq(\eta_i)$ is switchable for $i<k$. The last square $sq(\eta_k)$ can be proved to be switchable with respect to $P$ in a similar way as in Case 1.
\qed
\end{proof}

We now define a \emph{zip} operation that applies switches to the above squares. 
\begin{definition}[Zip operation $\mathcal{W}$ to $\mathcal{E}$]
Let $P$ be an almost canonical path of $\mathbb{G}$ visiting $\alpha$ before $\beta$, and let $l_z$ and $l_a$ be Rows $1$ and $0$, respectively, directed $\cal{W}$ to $\cal{E}$. Then the \emph{zip $\cal{W}$ to $\cal{E}$} operation $Z=zip(\mathbb{G},P,l_z,l_a)$ applies switches to the squares $sq(\eta_i)$, $1\le i \le k$ and $i$ odd, in order from $i=1$ to $k$. 
\end{definition}
We now show that after every square switching of this zip operation  we get a simple $s,t$ path.

\begin{lemma}
\label{lem:almost_can}
Let $P$ be an almost canonical path of $\mathbb{G}$ visiting $\alpha$ before $\beta$, let $l_z$ be Row $1$ directed eastward, and $l_a$ be Row $0$. The path after switching each square $sq(\eta_i)$, $i$ odd, in the zip operation $Z=zip(\mathbb{G},P,l_z,l_a)$ is a simple $s,t$ path of $\mathbb{G}$.
\end{lemma}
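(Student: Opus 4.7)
My plan is to prove the lemma by induction on the odd index $i$ of the square just switched, showing that the path $P^{(i)}$ resulting from switching $sq(\eta_1), sq(\eta_3), \ldots, sq(\eta_i)$ in order is a simple $s,t$ Hamiltonian path. The base case $P^{(0)} = P$ is simple by hypothesis, since $P$ is almost canonical.

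For the inductive step, I first establish Hamiltonicity of $P^{(i)}$. By Observation~\ref{obs:sq_switch_hamiltonicity}, this holds provided $sq(\eta_i)$ is switchable with respect to $P^{(i-2)}$. Lemma~\ref{lem:almost_can_switchable} asserts switchability with respect to the original $P$; I would upgrade it inductively by observing that the earlier switches $sq(\eta_1), \ldots, sq(\eta_{i-2})$ act strictly to the west of $\eta_i$ on $l_z$, so the four cells of $sq(\eta_i)$, the vertical edges of $c_{nl}$ and $c_{fr}$ witnessing switchability, and the subpath in $tr$ demanded by condition~(ii) of Definition~\ref{def:square_switch} are all left untouched. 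Observation~\ref{obs:sq_switch_hamiltonicity} then produces the Hamiltonian $P^{(i)}$.

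For simplicity, the crucial fact is that a square-switch alters $P^{(i-2)}$ only inside the four cells of $sq(\eta_i)$. Consequently, every internal subpath of $P^{(i)}$ disjoint from these cells coincides with the corresponding internal subpath of $P^{(i-2)}$ and is simple by induction. For internal subpaths meeting $sq(\eta_i)$, I would case-analyze the local configuration using the classification of the vertical edges of $c_{nl}$ and $c_{fr}$ from the proof of Lemma~\ref{lem:almost_can_switchable}: those edges come from $\eta_i$ paired with a unit $\mathcal{W}$/$\mathcal{E}$ cookie, with $\mathrm{seg}[s,\alpha]$/$\mathrm{seg}[\beta,t]$, or (inductively) with the stretched row-$0$/row-$1$ segment already produced by earlier switches. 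In each case, the net geometric effect is to extend the straight segments of Rows~$0$ and~$1$ eastward past the former crossing of $\eta_i$, to shorten $\eta_i$ by two units from the top, and to redirect the remaining path into Row~$2$ in a way that realizes the minimum bend count required of the corresponding internal subpath.

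The main obstacle will be the two boundary cases covered by Case~$1$ of Lemma~\ref{lem:almost_can_switchable}: the first switch $sq(\eta_1)$, which interacts with the $\mathcal{W}$-side structure near $s$ and $\alpha$, and the last switch $sq(\eta_k)$, which interacts with the $\mathcal{E}$-side structure near $\beta$ and $t$. Here the reshaped subpath may pass through a corner of $\mathbb{G}$, and I must explicitly verify that no forbidden cookie or corner separator with more than the minimum number of bends is produced, and that subpaths touching the corners still attain their minimum bend counts. For the interior indices $1 < i < k$, the argument reduces to a uniform local check, since the switch merely converts a pair of $\mathcal{N}$-ended straight-separator tops (or their previously absorbed analogues) into row-$0$ and row-$1$ horizontal extensions, preserving simplicity of every affected internal subpath.
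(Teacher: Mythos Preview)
Your proposal is correct and follows essentially the same route as the paper: establish switchability, invoke Observation~\ref{obs:sq_switch_hamiltonicity} for Hamiltonicity, and then verify simplicity by describing the local change (shortening $\eta_i,\eta_{i+1}$ into an $\mathcal{S}$ cookie while extending the horizontal structure in $tr$), treating the last switch $sq(\eta_k)$ separately. Your inductive framing is in fact slightly more careful than the paper's own argument, which cites Lemma~\ref{lem:almost_can_switchable} only for switchability with respect to the original $P$ and does not explicitly argue---as you do---that the earlier switches leave the cells of $sq(\eta_i)$ and the local witness for condition~(ii) untouched, so that $sq(\eta_i)$ remains switchable with respect to $P^{(i-2)}$.
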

\begin{proof}
By Lemma~\ref{lem:almost_can_switchable}, each of  $sq(\eta_i)$, $1\le i \le k$ and $i$ odd, is switchable with respect to $P$. By Observation~\ref{obs:sq_switch_hamiltonicity}, switching the square $sq(\eta_i)$, $i$ odd,  yields a Hamiltonian path $P_i$ of $\mathbb{G}$.
We now prove that $P_i$ is simple. 

For $i<k$, switching $c_{fr}$ in $sq(\eta_i)$ creates an $\cal{S}$ cookie by shortening the cross separators $\eta_i$ and $\eta_{i+1}$; Switching $c_{nl}$ increases the size of the corner $\cal{W}$ cookie in track $tr$ by $2$. The cross separators $\eta_{i+2}$ to $\eta_k$ and the $\cal{E}$ cookies, if there is any, are the same in $P$ and $P_i$. Therefore, 
$P_i$ is a simple path.
See Figure~\ref{fig:gen_transpose} (c)-(e).

In the path $P_k$ obtained after switching the square $sq(\eta_k)$, $l_a$ and $l_z$ are two line segments $seg[a_1,a_2=\beta]$ and $seg[q_1,q_2]$ connected by  edge $(a_2=\beta, q_2)$. Then, $P_k$ visits $\beta$ before $\alpha$, has one horizontal straight separator; and the final subpath of $P_k$ contains  $\cal{S}$  cookies, and probably unit $\mathcal{E}$ cookies preceded by the only corner separator $\nu_1$ created from $\eta_k$. In fact, the subpath from $s'=v_{0,2}$ to $t$ is almost canonical. Therefore, 
$P_k$ is simple.
See Figure~\ref{fig:gen_transpose} (f).
\qed
\end{proof}





\subsection{$P$ neither canonical, nor almost canonical}
\label{subsec:not_almost_canonical}

\begin{figure}[!ht]
    \centering
    \includegraphics[width=0.7\textwidth]{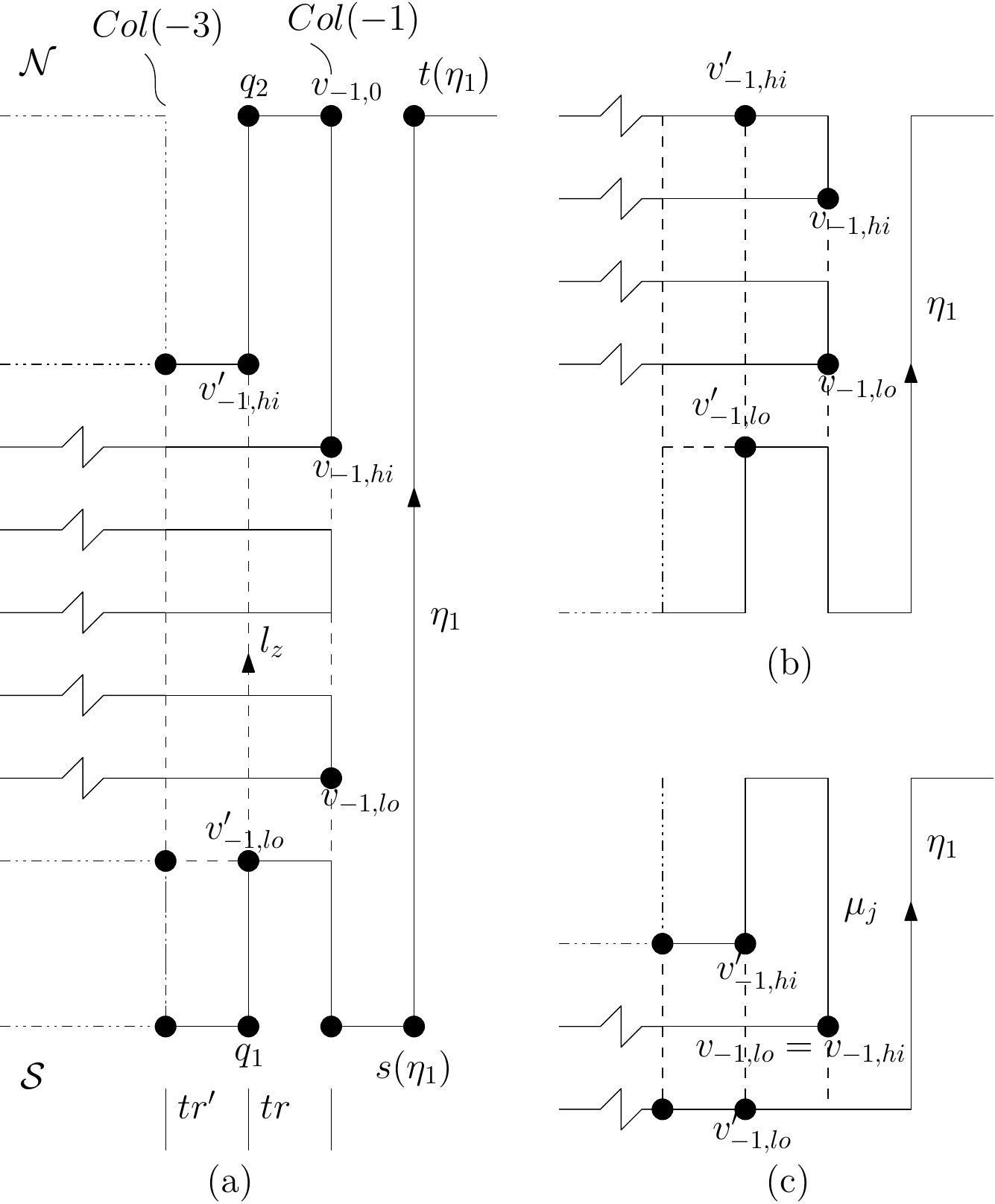}
    \caption{Simple path $P$ in $tr$ and $tr'$ with $l_z$  in $Col(-2)$  (a) corner separator, $\cal{W}$ cookies that reach $Col(-1)$, and an $\cal{S}$ cookie in $tr$;     (b)  a $\cal{W}$  corner cookie that reaches $Col(-1)$;  (c) bend $b_ j$ of $\mu_j$ in $Row(m-2)$ and $v_{-1,y_{hi}}$ = $v_{-1,y_{lo}}$. 
    }   
    \label{fig:simple_lz}
\end{figure}

\smallskip 

\noindent 
Let $P$ be a simple path of $\mathbb{G}$ that visits $\alpha$ before $\beta$.
We abbreviate $x(\eta_1 ) -1$ to $-1$. Thus $Col(-1)$ lies one unit west of $\eta_1$ and node $v_{-1,0}$ is the grid node on $\cal{N}$ one unit west of $t(\eta_1)$.  Lines $l_a$, $l_z$ (the zipline),  and $l_b$ lie in $Col(-1)$,  $Col(-2)$, and $Col(-3)$, respectively. 
Since $P$ is neither almost canonical with unit-size west cookies nor in canonical form, $Col(-1)$,  $Col(-2)$, and $Col(-3)$ are well defined. Zipline $l_z$ is directed from $q_1$ on $\cal{S}$ to $q_2$ on $\cal{N}$. See Figure~\ref{fig:simple_lz}(a).  

The following observation shows that $Col(-1)$ contains at least one node that is joined by a horizontal segment of $P$ to $\cal{W}$.  We denote the row index of the highest and lowest such nodes by $hi$ and $lo$. It may occur that $lo$ = $hi$.
See Figure~\ref{fig:simple_lz}(b)-(c).

\begin{figure}[!ht]
    \centering
    \includegraphics[width=\textwidth]{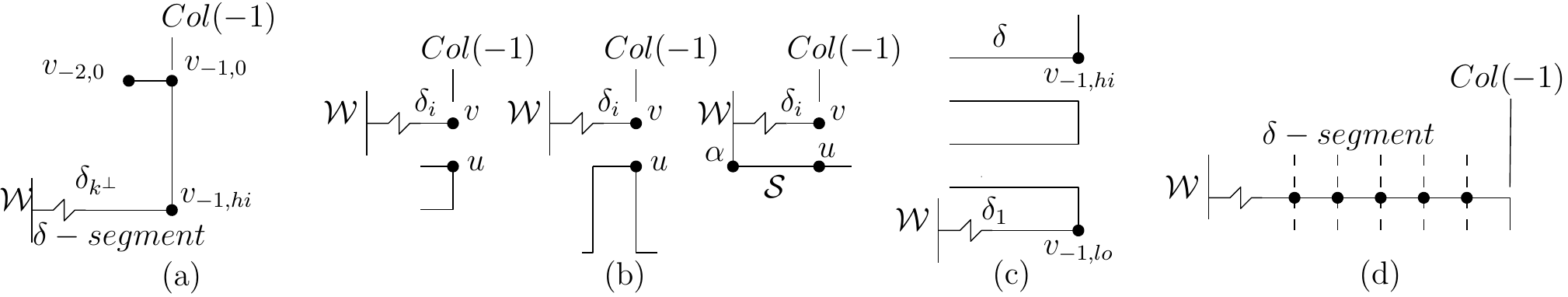}
    \caption{Illustrations for: (a) Observation~\ref{obs:y_hi} (b) Observation~\ref{obs:node_below} (c) Observation~\ref{obs:odd_segs} and (d) Observation~\ref{obs:above_below_seg}}
    \label{fig:sec3_obs}
\end{figure}

\begin{observation}\label{obs:y_hi}
Path $P$ contains the following: i) edge $(v_{-2,0}, v_{-1,0})$  on $\cal{N}$; ii) a vertical segment with endpoints $v_{-1,0}$ and $v_{-1,hi}$, where $v_{-1,hi}$ is an internal node of $Col(-1)$; and iii) a horizontal segment in $Row(hi)$ that extends from $v_{-1,hi}$ to $\cal{W}$. Furthermore, either $v_{-1,hi}= b_j$, where $b_j$ is the bend in $\mu_j$, or $v_{-1,hi}$ lies on a $\cal{W}$ corner cookie. See Figure~\ref{fig:sec3_obs}(a).
\end{observation}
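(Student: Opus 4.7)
The plan is to first pin down Part~(i) using the structure of $P$ around $\eta_1$, and then trace the internal subpath of $P$ starting at $v_{-1,0}$ southward to derive~(ii),~(iii), and the ``Furthermore'' clause. For Part~(i), I aim to show the east edge $(v_{-1,0}, t(\eta_1))$ is not in $P$. Since $P$ visits $\alpha$ before $\beta$ and $\eta_1$ is the first straight separator, $P$ enters $\eta_1$ at $s(\eta_1)$ on $\cal{S}$ and exits at $t(\eta_1)$ on $\cal{N}$; the remaining $P$-edge at $t(\eta_1)$ therefore lies on $\cal{N}$. If that edge were westward, namely $(v_{-1,0}, t(\eta_1))$, then $P$ could not reach any node east of $\eta_1$: the interior of $\eta_1$ admits no horizontal $P$-crossing (each internal node of $\eta_1$ already carries both of its vertical edges in $P$), and both endpoints of $\eta_1$ are then consumed. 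Hence the $\cal{N}$-edge at $t(\eta_1)$ goes east, $(v_{-1,0}, t(\eta_1)) \notin P$, and the two $P$-edges at the internal $\cal{N}$-node $v_{-1,0}$ are forced to be $(v_{-2,0}, v_{-1,0})$ and $(v_{-1,0}, v_{-1,1})$. This gives~(i) and provides the south edge that launches~(ii).

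For~(ii) and~(iii), the south edge at $v_{-1,0}$ opens an internal subpath of $P$. Since internal subpaths of a simple path carry $0$, $1$, or $2$ internal bends, this subpath is a straight separator, a corner separator, or a cookie. A straight separator $\cal{N}$-to-$\cal{S}$ in $Col(-1)$ would be encountered before $\eta_1$, contradicting $\eta_1$'s role as the first straight separator in a different column. An $\cal{N}$-cookie requires a unit-width base on $\cal{N}$ adjacent to $v_{-1,0}$: candidate $(v_{-1,0}, v_{-2,0})$ is in $P$ by~(i), contradicting the cookie definition; candidate $(v_{-1,0}, t(\eta_1))$ would force the cookie's east vertical segment to bend at an interior node of $\eta_1$, impossible since such nodes already carry both vertical edges of $P$. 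A corner separator to $\cal{E}$ is blocked by $\eta_1$. Hence the subpath is a corner separator from $\cal{N}$ to $\cal{W}$ of the form $v_{-1,0} \to v_{-1,y'} \to v_{0,y'}$, with internal bend $v_{-1,y'} \in Col(-1)$ and a horizontal segment in Row $y'$ reaching $\cal{W}$ at $v_{0,y'}$. The nodes $v_{-1,1}, \ldots, v_{-1,y'-1}$ are interior to this vertical segment, so they carry no horizontal $P$-edge; therefore $y'$ is the smallest row index at which a $Col(-1)$ node is joined horizontally to $\cal{W}$, namely $y' = hi$. This establishes~(ii) and~(iii).

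For the ``Furthermore'' clause, $v_{-1,0}$ is the rightmost node on $\cal{N}$ that can be the $\cal{N}$-endpoint of any corner separator cutting off $s$, since every node east of it lies at or beyond $\eta_1$. Consequently, the corner separator just identified is the outermost corner separator cutting off $s$. If its $\cal{W}$-endpoint $v_{0,hi}$ is not joined to $s$ by a $\cal{W}$-segment of $P$, the separator is not a corner cookie, and is therefore $\mu_j$, giving $b_j = v_{-1,hi}$. Otherwise it qualifies as a $\cal{W}$ corner cookie, and $v_{-1,hi}$, being its bend, lies on this cookie. I expect the cookie-elimination step of the middle paragraph to be the principal obstacle, since it must jointly invoke Part~(i) and the vertical rigidity of the interior of $\eta_1$ to dismiss both candidate unit-width $\cal{N}$-bases.
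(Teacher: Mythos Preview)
Your proof is correct and follows essentially the same route as the paper's: both arguments pin down the two $P$-edges at $v_{-1,0}$ by excluding the edge to $t(\eta_1)$, follow the maximal vertical segment south to an internal bend, and then classify the internal subpath through that bend by eliminating the straight-separator and $\cal{N}$-cookie options to leave only a $\cal{W}$ corner separator or $\cal{W}$ corner cookie. You supply more explicit justification at the two places where the paper is terse---namely, why $(v_{-1,0},t(\eta_1))\notin P$ (via the separation argument that $P$ could not then reach nodes east of $\eta_1$) and why an $\cal{N}$-cookie with base $(v_{-1,0},t(\eta_1))$ is impossible---but the skeleton of the argument is the same.
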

\begin{proof}
Node $v_{-1,0}$ has degree 3 in $\mathbb{G}$, and in $P$ it has degree 2  but  is not adjacent to $t(\eta_1)$. Hence $v_{-1,0}$ is incident in $P$ to  the boundary edge between   $v_{-2,0}$ and $v_{-1,0}$ and is also incident to a vertical edge belonging to a maximal segment of $P$ between nodes $v_{-1,0}$ and  $v_{-1,hi}$, where $v_{-1,hi}$ cannot lie on $\cal{S}$ as $\eta_1$ is the first straight separator. The 
internal path of $v_{-1,hi}$ cannot be an $\cal{N}$ cookie due to the edge $(v_{-2,0}, v_{-1,0})$ on $\cal{N}$.  The only other internal paths that could bend at $v_{-1,hi}$ are a $\cal{W}$ corner cookie and a $\cal{W}$ corner separator.  In the latter case, the corner separator must be $\mu_j$.  \qed 
\end{proof}

The next two observations give some properties of the coverage of nodes on $l_a$ in $Col(-1)$ by $P$. 

\begin{observation}\label{obs:node_below}
Let $v$ be a node of $Col(-1)$ that is joined to its neighbor above and to $\cal{W}$ by a horizontal segment of $P$, and let $u$ be the grid node one unit below $v$.  Then $u$ lies in one of the following positions: i) at the top corner of a $\cal{W}$ cookie;  ii) at
the top right corner of an $\cal{S}$ cookie in $tr$; or iii) on segment $seg[\alpha , s(\eta_1)]$ of $P$ on $\cal{S}$. See Figure~\ref{fig:sec3_obs}(b).
\end{observation}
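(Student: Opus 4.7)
\noindent
The plan is to first pin down which edges of $u$ lie in $P$, and then analyze the internal subpath of $P$ through $u$. Since $v$ has degree two in $P$ with its north and west edges in use, the edge $(v,u)$ is not in $P$. If $u$ is internal (i.e., $y_v + 1 < m-1$), then the east neighbour of $u$ is an internal node of the straight separator $\eta_1$ and uses both of its vertical edges in $P$, so the east edge of $u$ is also excluded. Hence $u$ uses its west and south edges in $P$ when $u$ is internal, and its west and east edges when $u$ lies on $\cal{S}$.

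For $u$ on $\cal{S}$ (so $y_v = m-2$), I would exploit the horizontal segment of $v$ at row $m-2$: every node at row $m-2$ in columns $0,\ldots,c-1$ lies on this segment with both horizontal edges in $P$ and thus has no south edge. This forbids any $\cal{S}$ cookie with a bend in that range and, together with the degree constraints at $\alpha$ and at $s(\eta_1) = v_{c,m-1}$, forces $P$ to run straight along $\cal{S}$ from $\alpha$ through $u$ to $s(\eta_1)$. Hence $u$ lies on $seg[\alpha,s(\eta_1)]$, giving case (iii).

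For $u$ internal, $u$ is a bend, and the two arms of its internal subpath $P_u$ can only reach $\cal{W}$ or $\cal{S}$. Since $P$ is simple, $P_u$ has at most two bends, one of which is at $u$, leaving three shapes to consider. If the west arm of $P_u$ is straight to $\cal{W}$ and the south arm bends back to $\cal{W}$, then $P_u$ is a $\cal{W}$ cookie with $u$ at its top corner, giving case (i). If the south arm is straight to $\cal{S}$ and the west arm bends down to $\cal{S}$, then $P_u$ is an $\cal{S}$ cookie with $u$ at its top right corner; simplicity forces the cookie to have horizontal extent one (a wider $\cal{S}$ cookie would enclose $\cal{S}$ nodes beneath it that no other subpath of $P$ can reach), so $P_u$ lies in $tr$, giving case (ii). The main obstacle is ruling out the remaining shape, in which both arms are straight and $P_u$ is a one-bend $\cal{W}$-$\cal{S}$ corner separator with bend $u$ cutting off the $\alpha$ corner. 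I would dispatch this by a connectivity argument on the two non-$P_u$ edges at the separator's boundary endpoints: every configuration either disconnects the $\alpha$-region from the rest of $P$, or traps $P$ inside the $\alpha$-region with no route to $t$, contradicting that $P$ is a Hamiltonian $s,t$ path.
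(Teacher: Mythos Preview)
Your proposal is correct and shares the paper's core move: determine that $u$ must carry its west and south $P$-edges (or lie on $\mathcal{S}$), and then classify the internal subpath through $u$. The paper's own proof is much terser because it leans on the structure theorem for simple $s,t$ paths from~\cite{NSWIwoca21}: once $u$ is known to bend west--south, the only admissible internal-path types are $\mathcal{W}$ cookies and $\mathcal{S}$ cookies (corner separators cut off only $s$ or $t$, never $\alpha$), so the $\mathcal{W}$--$\mathcal{S}$ corner-separator shape is excluded a priori and need not be argued away. Your connectivity argument for that case is sound and makes the observation self-contained, but it is extra work the paper does not do; likewise your justification that an $\mathcal{S}$ cookie has unit width is already baked into the paper's notion of cookie (the base is a single boundary edge). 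In short: same approach, with you reproving structural facts the paper imports.
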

\begin{proof}
If $u$ is an internal node $v_{x,y}$ of $\mathbb{G}$, then $P$ joins $u$ to its west and south neighbors $v_{x-1,y}$ and $v_{x,y+1}$. The only possibilities for the internal path of $u$ are a $\cal{W}$ and an $\cal{S}$ cookie in $tr$.  If $u$ is not internal to $\mathbb{G}$, then $u = v_{-1, m-1}$ on $\cal{S}$, where no nodes strictly between $\alpha$ and $s(\eta_1)$ can lie on a vertical edge of $P$. 
\qed
\end{proof}

\begin{observation}\label{obs:odd_segs}
From $v_{-1,hi}$ to $v_{-1, lo}$ inclusive,  there are an odd number of nodes in $Col(-1)$ on horizontal segments of $P$ that extend to $\cal{W}$, and these nodes appear consecutively in $Col(-1)$.  Any such segments below the topmost one occur as horizontal sides of $\cal{W}$ cookies. See Figure~\ref{fig:sec3_obs}(c).
\end{observation}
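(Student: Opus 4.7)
The plan is to iterate Observation~\ref{obs:node_below} down $Col(-1)$, starting at $v_{-1,hi}$, and to read off the resulting structure.

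First, I would use Observation~\ref{obs:y_hi} to set up the base case. That observation guarantees that $P$ contains a vertical segment descending from $v_{-1,0}$ to the internal node $v_{-1,hi}$, together with a horizontal segment in $Row(hi)$ running from $v_{-1,hi}$ west to the $\cal{W}$ boundary. Hence $v_{-1,hi}$ itself is a node of $Col(-1)$ on a horizontal segment of $P$ extending to $\cal{W}$, and by the definition of $hi$ it is the topmost such node. Moreover the hypotheses of Observation~\ref{obs:node_below} (joined above by a vertical edge of $P$, joined west by a horizontal segment to $\cal{W}$) are in force at $v := v_{-1,hi}$.

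Next, I would induct on the row of the current node $v$. Let $u$ be the node of $Col(-1)$ immediately below $v$. By Observation~\ref{obs:node_below}, $u$ falls into exactly one of three cases: (i) $u$ is the top corner of a $\cal{W}$ cookie; (ii) $u$ is the top-right corner of an $\cal{S}$ cookie in the track $tr$; or (iii) $u$ lies on $seg[\alpha, s(\eta_1)]$. In cases (ii) and (iii), $u$ is not on a horizontal segment of $P$ extending to $\cal{W}$, the descent halts, and $v = v_{-1,lo}$. In case (i), because $u \in Col(-1)$ the cookie's vertical side lies entirely in $Col(-1)$ and runs from $u$ down to some bottom corner $u' = v_{-1,y'}$; both $u$ and $u'$ lie on horizontal segments of $P$ extending to $\cal{W}$ (the cookie's top and bottom horizontal sides), while the nodes of $Col(-1)$ strictly between $u$ and $u'$ lie on the cookie's vertical side of $P$ and hence are on no horizontal segment at all. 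Since $u'$ is joined to $u$ above by this vertical side and to $\cal{W}$ by the cookie's bottom horizontal side, the inductive hypothesis is restored at $v := u'$ and the induction continues.

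The iteration must terminate because $\mathbb{G}$ is finite, and termination occurs exactly at the first occurrence of case (ii) or (iii). If $k$ $\cal{W}$ cookies are encountered along the way, each contributes precisely two qualifying nodes (top and bottom corner), and together with the initial node $v_{-1,hi}$ we obtain $1 + 2k$ qualifying nodes in $Col(-1)$ from $v_{-1,hi}$ down to $v_{-1,lo}$ inclusive, which is odd. Every qualifying node strictly below $v_{-1,hi}$ is the top or bottom corner of one of these $\cal{W}$ cookies, so the associated horizontal segment is a horizontal side of a $\cal{W}$ cookie, as claimed. Consecutivity is then immediate from the construction: in case (i) the next qualifying node $u$ sits in the very next row below $v$, and between $u$ and $u'$ no $Col(-1)$ node is on a horizontal segment of $P$ extending to $\cal{W}$, so no qualifying node is ever skipped over. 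The one subtlety I expect is confirming that the hypotheses of Observation~\ref{obs:node_below} genuinely persist at the bottom corner of each intermediate $\cal{W}$ cookie, which reduces to noting that the cookie's vertical side supplies the required upstairs vertical edge and that the cookie's bottom horizontal side supplies the required horizontal segment to $\cal{W}$; once this is in hand, the trichotomy in Observation~\ref{obs:node_below} drives the entire argument.
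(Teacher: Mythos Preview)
Your approach---iterating Observation~\ref{obs:node_below} downward from $v_{-1,hi}$---is exactly the paper's approach; the paper compresses the whole argument into a single sentence invoking that observation.

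However, your consecutivity argument has a gap. In case (i) you allow the $\cal{W}$ cookie to have a vertical side in $Col(-1)$ running from $u$ down to ``some bottom corner $u' = v_{-1,y'}$,'' with possibly nodes of $Col(-1)$ strictly between $u$ and $u'$. You then observe that such intermediate nodes are non-qualifying, which is true---but that is precisely what would \emph{break} consecutivity: a non-qualifying node sitting between the qualifying nodes $u$ and $u'$ means the qualifying nodes do not form a contiguous block in $Col(-1)$. ``No qualifying node is skipped'' is the wrong target; you need that no \emph{non}-qualifying node appears between two qualifying ones. What is actually required is that $u'$ is the node immediately below $u$, i.e., that every $\cal{W}$ cookie has height one. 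This follows from the paper's definition of a cookie: the base is a single boundary edge, so for a $\cal{W}$ cookie the two endpoints on $\cal{W}$ lie in adjacent rows, the two horizontal sides lie in adjacent rows, and the vertical side in $Col(-1)$ is a single edge. Once you insert this fact, your induction step gives $u'$ in the row just below $u$, consecutivity is immediate, and $1 + 2k$ is exactly the number of rows from $hi$ to $lo$ inclusive.
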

\begin{proof}
By Observation~\ref{obs:node_below},  either  $v_{-1, lo} = v_{-1,hi}$, or else the nodes in $Col(-1)$ with $y$-index in the range $[hi, lo]$ occur in pairs on $\cal{W}$ cookies.  
\qed
\end{proof}

We denote by $k^{\perp}$ the number of segments of $P$ that extend from $Col(-1)$ to $\cal{W}$. We call them  $\delta$-$segments$.  By the observation above, 
$k^{\perp}$ is odd.  Analogous to the $\eta_i$,  we denote them $\delta_1$, ..., $\delta_{k^\perp}$, with indices increasing along $l_z$. Similarly, we index the squares on $l_z$ with the indices of the $\delta$-segments through their centers:  $sq_1$, ..., $sq_{k^{\perp}}$. Note that $Row(lo)$ is the row of $\delta_1$, and $Row(hi)$ is the row of  $\delta_{k^{\perp}}$.  Thus $v_{-1, lo}$ and $v_{-1,hi}$ occur in $sq_1$ and  $sq_{k^{\perp}}$ in position $p_2$ of each respective square. See Figure~\ref{fig:simple_lz}. The next easy observation is very useful.




\begin{observation}\label{obs:above_below_seg}
The nodes internal to a $\delta$-segment of $P$ cannot be adjacent in $P$ to grid nodes one unit above or below them, as their two incident horizontal edges give them degree 2 on $P$. See Figure~\ref{fig:sec3_obs}(d). \end{observation}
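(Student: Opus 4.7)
The plan is to verify the observation directly from the definitions, since the statement itself already hints at the one-line argument: an internal node of a $\delta$-segment has both of its incident horizontal edges in $P$, which saturates its degree-$2$ budget on a Hamiltonian path and leaves no room for a vertical edge.

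First I would recall from Section~\ref{sec:preliminaries} that a \emph{segment} of $P$ is a maximal straight subpath, so a $\delta$-segment is a maximal horizontal subpath of $P$ whose two endpoints lie on $\mathcal{W}$ and in $Col(-1)$. Let $v$ be a node strictly between these two endpoints. Then both of $v$'s horizontal neighbors along the segment are joined to $v$ by edges of $P$, so $v$ has at least two incident edges in $P$, both of them horizontal.

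Second, I would check that $v$ cannot coincide with either of the Hamiltonian path endpoints $s = v_{0,0}$ or $t = v_{n-1,m-1}$. Node $t$ lies on $\mathcal{E}$, which is strictly east of $Col(-1)$, hence appears on no $\delta$-segment at all. Node $s$ lies on $\mathcal{W}$, so it can only occur as the $\mathcal{W}$-endpoint of a $\delta$-segment (possibly in Row~$0$), never as an internal node. Consequently $v$ is a non-endpoint vertex of $P$ and has degree exactly $2$ in $P$.

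The two incident horizontal edges therefore already account for both of $v$'s allowed edges in $P$, so no further edge of $P$ can be incident to $v$; in particular, $v$ is not adjacent in $P$ to the grid nodes directly above or below it, which is the claim. I do not anticipate any real obstacle: the argument is a direct application of the degree-$2$ constraint of a Hamiltonian path, and the only minor bookkeeping step is confirming that the internal nodes of a $\delta$-segment are never the path endpoints $s$ or $t$.
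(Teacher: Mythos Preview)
Your proposal is correct and follows exactly the reasoning the paper embeds in the statement itself (the paper gives no separate proof beyond the phrase ``as their two incident horizontal edges give them degree 2 on $P$''). Your additional bookkeeping step verifying that an internal node of a $\delta$-segment cannot be $s$ or $t$ is a harmless refinement the paper leaves implicit.
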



\begin{lemma}\label{lem:lo_switchable}
The square $sq_1$ on $l_z$ is switchable for $P$. 
\end{lemma}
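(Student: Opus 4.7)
I will verify the two conditions of Definition~\ref{def:square_switch} applied to $sq_1$, whose far-in-$tr'$ cell is the north-west cell $c_{fl}$ and whose near-in-$tr$ cell is the south-east cell $c_{nr}$.

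For condition~(i) I classify each of the eight edges of $c_{fl}$ and $c_{nr}$. The two horizontal edges in $Row(lo)$ lie on $\delta_1$, hence on $P$. The vertical edges of $c_{fl}$ and $c_{nr}$ on $l_z$ and on $l_b$ are off $P$ by Observation~\ref{obs:above_below_seg} applied to the $\delta_1$-interior nodes $v_{-3,lo}$ and $p_5=v_{-2,lo}$. The right side of $c_{nr}$, on $l_a$, is off $P$ because Observation~\ref{obs:y_hi} forces the second $P$-neighbor of $v_{-1,lo}$ to lie north, not south. For the bottom edge of $c_{nr}$, Observation~\ref{obs:node_below} applied to $v_{-1,lo}$ rules out case~(i), which would produce a $\delta$-segment at $Row(lo+1)$ and contradict the choice of $lo$; in the remaining cases~(ii) and (iii) the edge is on $P$ (as the top side of an $\cal{S}$ cookie in $tr$ or as part of $seg[\alpha,s(\eta_1)]$, respectively). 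For the top edge of $c_{fl}$: when $k^\perp\ge 3$ it is part of $\delta_2$; when $k^\perp=1$, the east and south grid neighbors of $v_{-2,lo-1}$ lie interior to $P$-segments (respectively, the vertical arm of $\mu_j$ or of the $\cal{W}$ corner cookie, and $\delta_1$ at $p_5$), so those edges cannot be on $P$, forcing the two $P$-neighbors of $v_{-2,lo-1}$ to be west and north. (When $lo=1$ this last case degenerates: $v_{-2,0}$ sits on the $\cal{N}$ boundary segment of $P$ joining $s$ to the $\cal{N}$-endpoint of the corner cookie, and the top edge is part of that segment.)

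For condition~(ii) I trace the post-switch path-cycle cover. When $k^\perp\ge 3$, switching $c_{fl}$ shortens the $\cal{W}$-cookie made by $\delta_1$ and $\delta_2$ to one ending at $Col(-3)$ and detaches a $4$-cycle around $c_{fr}$, which lies in $tr$. When $k^\perp=1$, the induced cycle visits $v_{-2,lo-1}\to v_{-2,lo-2}$, continues north up $Col(-2)$ to $v_{-2,0}$, crosses east via the $\cal{N}$-edge given by Observation~\ref{obs:y_hi} to $v_{-1,0}$, descends $Col(-1)$ along the vertical arm of $\mu_j$ (or of the $\cal{W}$ corner cookie) to $v_{-1,lo}$, runs west on $\delta_1$ to $p_5$, and returns to $v_{-2,lo-1}$ through the new right edge of $c_{fl}$. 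Every node lies in $Col(-1)\cup Col(-2)=tr$.

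The main obstacle, in the $k^\perp=1$ case, is confirming that the subpath of $P$ from $v_{-2,lo-2}$ up to $v_{-2,0}$ stays in $Col(-2)$. I plan to use the simplicity of $P$ restricted to the $s$-side of $\mu_j$: the internal sub-path through $v_{-2,lo-1}$ must be a cookie, and the forced west- and north-$P$-neighbors of $v_{-2,lo-1}$ leave only two possibilities, an $\cal{N}$-cookie whose east vertical is $Col(-2)$, or a $\cal{W}$-cookie with lower bend $v_{-2,lo-1}$. The latter is ruled out because on exit at $v_{0,lo-1}$ the simple path has no unvisited grid neighbor available ($v_{0,lo}$ being reserved for $\mu_j$), leaving the $\cal{N}$-cookie structure and hence the desired $Col(-2)$ subpath.
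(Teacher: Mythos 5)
Your proof follows essentially the same route as the paper's: a case split on $k^{\perp}>1$ versus $k^{\perp}=1$, an edge-by-edge verification that $c_{fl}$ and $c_{nr}$ are switchable using Observations~\ref{obs:y_hi}--\ref{obs:above_below_seg}, and an explicit trace of the cycle in $tr$ for condition~(ii); your degree-count argument for the top edge of $c_{fl}$ (two incident edges forced off $P$, hence the other two are on $P$) is in fact a little cleaner than the paper's case analysis on the internal path of $p_6$.

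One enumeration slip in the $k^{\perp}=1$ case: you assert that the internal subpath through $v_{-2,lo-1}$ ``must be a cookie'' and list only an $\cal{N}$ cookie and a $\cal{W}$ cookie as candidates. A bend at $v_{-2,lo-1}$ with $P$-neighbors to the west and north can also belong to a corner separator $\mu_{j-1}$ (one arm straight to $\cal{W}$, the other straight up $Col(-2)$ to $\cal{N}$), which is exactly the third possibility the paper names. The omission is harmless here, because $\mu_{j-1}$ supplies the same vertical segment of $Col(-2)$ from $v_{-2,lo-1}$ to $v_{-2,0}$ that your $\cal{N}$-cookie case uses to close the cycle in $tr$, but the claim ``must be a cookie'' is not justified and is false in general. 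Also note that your exclusion of the $\cal{W}$-cookie subcase rests on $v_{0,lo}$ having its second $P$-edge directed south (toward $\alpha$), which follows from the assumption that $P$ visits $\alpha$ before $\beta$ and the structure of $\mu_j$; it is worth saying this explicitly rather than only that $v_{0,lo}$ is ``reserved.''
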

\begin{proof}
There are two cases: i) $k^{\perp} > 1$ ($lo \neq hi$) and ii) $k^{\perp} = 1$ ($lo = hi$). 

\medskip

\noindent 
{\bf Case 1, $k^{\perp} > 1$:}  Cells $cell_{fl}$ and $cell_{fr}$ have lower sides in $\delta_1$  (the segment of $P$ extending from $v_{-2,lo}$ to $\cal{W}$). By Observation~\ref{obs:odd_segs}, these cells lie inside a $\cal{W}$ cookie that ends in $Col(-1)$.  Hence $cell_{fl}$ is switchable for $P$. The diagonally opposite cell $cell_{nr}$ is switchable for $P$, as $cell_{nr}$ has its upper side in $\delta_1$ (the lower side of the $\cal{W}$ cookie) and its lower side at the end of an $S$ cookie or on $\cal{S}$.  Thus condition i) of Definition~\ref{def:square_switch} holds. 
Condition ii) is satisfied by the subpath $P_{p_5, p_6}$ of $P$ in $tr$. This subpath consists of the edge at the end of the $\cal{W}$ cookie and its two adjacent edges on $\delta_1$ and $\delta_2$ (the lower and upper sides of the cookie). Therefore, switching $c_{fl}$ will create a $1 \times 1$ cycle in $tr$.
Thus $sq_1$ on $l_z$ is switchable for $P$ in case 1. 

\medskip

\noindent
{\bf Case 2, $k^{\perp} = 1$:} By Observation~\ref{obs:odd_segs}, 
there are no $\cal{W}$ cookies. By Observation~\ref{obs:y_hi}, segment $\delta_1$ forms part of a $\cal{W}$ corner cookie or part of $\mu_j$.   By Observation~\ref{obs:node_below}, node $p_1$ of $sq_1$  either sits on top of an $\cal{S}$ cookie in $tr$, or sits on $\cal{S}$.    
Thus $c_{nr}$ of $sq_1$ 
is the cell with a side at the top of an $\cal{S}$ cookie in $tr$, or the cell with a side in $\cal{S}$ and a side in either $\mu_j$ or a $\cal{W}$ corner cookie. In the former case, $P$ contains $seg[\alpha , s(\eta_1)]$. It follows that $c_{nr}$ is switchable for $P$.  

Next we show $c_{fl}$ is switchable for $P$. 
By Observation~\ref{obs:y_hi}, $c_{fl}$ of $sq_1$  is either a cell in a $\cal{W}$ 
corner cookie or a cell with a lower horizontal side on $\mu_j$. 
Cell $c_{fr}$ shares a non-edge of $P$ (i.e.,  $(p_5, p_6)$ ) on $l_z$ with $c_{fl}$ and either lies at the end of a $\cal{W}$ corner cookie or has $b_j$ as its lower right vertex (i.e., $p_2$).  Either way,  $c_{fr}$ has edges of $P$ on its right and lower sides and a non-edge of $P$ on its left side in $l_z$.  
Cell $c_{fl}$ contains the node $p_6$ above the center  $p_5$ of $sq_1$. If $p_6$ is an interior node of $\mathbb{G}$, the only possibilities for its internal path are a $\cal{N}$ cookie and $\mu_{j-1}$; either way, $cell_{fl}$ has both horizontal sides in $P$. If $p_6$  lies on $\cal{N}$, then $c_{fl}$ lies inside a $\cal{W}$ corner cookie that ends in $Col(-1)$, and thus $cell_{fl}$ is switchable.  This completes the proof that both $c_{fl}$ and $c_{nr}$ are switchable for $P$.

To complete the proof that $sq_1$ on $l_z$ is switchable, we show that switching $c_{fl}$ creates a path-cycle cover whose cycle lies in $tr$.  
By Observation~\ref{obs:y_hi}, switching $c_{fl}$ creates a cycle in $tr$ consisting of the following: the segment of $P$ on $l_z$ with an
endpoint $v_{-2,0}$ on $\cal{N}$ (this is of length $0$ if $P$ has a $\cal{W}$ corner cookie); the boundary edge $(v_{-2,0}, v_{-1,0})$; the segment of $P$ of positive
length from $v_{-1,0}$ to $v_{-1, lo}$; and the horizontal edge of $P$ incident to the center $p_5$ of $sq_1$, where the edge lies on $\delta_1$ (belongs to a segment of $P$ extending to $\cal{W}$).   

This completes the proof that $sq_1$ on $l_z$ is switchable for $P$ in case ii),   
and concludes the proof of the statement of the lemma. \qed 
\end{proof}

The next two lemmas will help to show that switching the odd-indexed squares in order from $sq_1$ to $sq_{k^{\perp}}$  yields a simple $s,t$ Hamiltonian path after each square switch, and the switch of square $sq_{k^{\perp}}$ results in a simple $s,t$ Hamiltonian path with $k'$ = $k+2$ cross separators, joined by an edge of $P$ on $\cal{N}$. 

\begin{lemma}\label{lem:new_simple_with_W_cookies}
If $P$ has $\cal{W}$ cookies, then  $k^{\perp} > 1$, and switching square 
$sq_1$ on $l_z$ yields a new simple $s,t$
Hamiltonian path $P'$. Paths $P$ and $P'$ are the same  outside $sq_1$, and the horizontal segments of $P'$ are the remaining segments of $P$, namely $\delta_i$, for $3 \leq i \leq k^{\perp}$.  

\end{lemma}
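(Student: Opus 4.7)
The plan is to establish the four assertions of the lemma in sequence, invoking the prior observations wherever possible. First, $k^{\perp} > 1$: Observation~\ref{obs:y_hi} implies that the east endpoint of the topmost $\delta$-segment lies either on $b_j$ of $\mu_j$ or on a $\cal{W}$ corner cookie, so that topmost segment is never a horizontal side of a regular $\cal{W}$ cookie. By Observation~\ref{obs:odd_segs}, every regular $\cal{W}$ cookie contributes its top and bottom sides as a pair of $\delta$-segments. Hence, if $P$ has at least one $\cal{W}$ cookie, then $k^{\perp} \geq 3$. This places us in case~1 of Lemma~\ref{lem:lo_switchable}, so $sq_1$ is switchable for $P$, and Observation~\ref{obs:sq_switch_hamiltonicity} yields a Hamiltonian path $P'$. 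Because the square-switch modifies only the edges inside $c_{fl}$ and $c_{nr}$, the paths $P$ and $P'$ coincide outside $sq_1$.

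The main work is verifying that $P'$ is simple. The switch affects exactly two internal subpaths of $P$ that meet $sq_1$. The first is the bottommost $\cal{W}$ cookie of $P$, whose horizontal sides are $\delta_1$ and $\delta_2$ and whose right vertical lies on $Col(-1)$; after the switch, its right vertical is effectively shifted west to the new edge $(p_8, p_9)$ on $Col(-3)$, yielding a U-shaped $\cal{W}$ cookie whose horizontal sides are the truncations of $\delta_1$ and $\delta_2$ from $\cal{W}$ to $Col(-3)$. This cookie has exactly two internal bends, at $p_8$ and $p_9$. The second is the internal subpath just below $sq_1$ in $tr$, which by Observation~\ref{obs:node_below} is either an $\cal{S}$ cookie in $tr$ or a portion of $seg[\alpha, s(\eta_1)]$; after the switch it is reshaped into a new $\cal{S}$ cookie whose two vertical sides lie on $Col(-1)$ and $Col(-2)$, extend upward to $Row(lo-1)$, and are joined across the top by the preserved edge $(p_3, p_6)$. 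The only internal bends of this subpath are at $p_3$ and $p_6$, again exactly two. All other internal subpaths of $P$ lie outside $sq_1$ and are inherited unchanged in $P'$, so $P'$ is simple.

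Finally, for the $\delta$-segment count, the switch removes the edges $(p_5, p_8)$ and $(p_2, p_5)$ of $\delta_1$ and the edge $(p_6, p_9)$ of $\delta_2$, so the surviving western portion of $\delta_1$ terminates at $p_8$ and the surviving western portion of $\delta_2$ terminates at $p_9$; neither reaches $Col(-1)$ in $P'$. The segments $\delta_3, \ldots, \delta_{k^{\perp}}$ lie strictly above $sq_1$ and are unaffected, so they comprise exactly the $\delta$-segments of $P'$. The main obstacle is the case analysis below $sq_1$: when $p_1$ and $p_4$ lie on $\cal{S}$, the edge $(p_1, p_4)$ of $seg[\alpha, s(\eta_1)]$ is lifted off the boundary and the new $\cal{S}$ cookie acquires its endpoints at $p_4$ and $p_1$, which must be verified separately from the case where the endpoints come from the base of a pre-existing $\cal{S}$ cookie.
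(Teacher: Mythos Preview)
Your proposal is correct and follows essentially the same approach as the paper: invoke Lemma~\ref{lem:lo_switchable} and Observation~\ref{obs:sq_switch_hamiltonicity} for switchability and Hamiltonicity, then verify simplicity by tracking exactly the two affected internal subpaths---the lowest $\mathcal{W}$ cookie shortens by two units to end on $l_b$, and the $\mathcal{S}$ cookie in $tr$ lengthens by two (or is created from $seg[\alpha,s(\eta_1)]$) to reach $(p_3,p_6)$. Your write-up is more explicit than the paper's about the node labels $p_1,\dots,p_9$ and the sub-case where $p_1,p_4$ lie on $\mathcal{S}$, but the underlying argument is the same.
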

\begin{proof}
Using Lemma~\ref{lem:lo_switchable} and Observation~\ref{obs:sq_switch_hamiltonicity}, $sq_1$ is switchable and switching it yields a $s,t$ Hamiltonian path $P'$.
The new path is simple, as  the  internal paths of nodes are the same for $P$ and $P'$, with the exception of nodes on the $\cal{S}$ cookie in $tr$ of $P$ (if the cookie exists) and the lowest $\cal{W}$ cookie of $P$. The switch of $sq$ shortens the $\cal{W}$ cookie by two units and lengthens the $\cal{S}$ cookie by two units (or grows a $\cal{S}$ cookie of length 2 in $tr$ if none exists in $P$). Thus each node lies on an internal path of an appropriate type with respect to $P'$.  To the left of $\eta_1$, path $P'$ = path $P$ above $\delta_2$. The two segments $\delta_1$ and $\delta_2$ of $P$ that were  in the lowest $\cal{W}$ cookie of $P$ have been reconfigured in $P'$, and the lowest node on $Col(-1)$ that lies on a segment of $P'$ extending to $\cal{W}$ is two units higher than for $P$.    
\qed
\end{proof}

\begin{lemma}\label{lem:no_W_cookies}
If $P$  has no $\cal{W}$ cookies, then  $k^{\perp}$ = $1$, and switching square $sq_1$ on $l_z$ yields a new simple $s,t$ Hamiltonian path $P'$ that  fills $l_a$ and $l_z$ with cross separators joined by an edge on $\cal{N}$.  The path $P'$ is simple $s,t$ Hamiltonian and has $k'$ = $k+2$ cross separators.  
\end{lemma}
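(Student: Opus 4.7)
My plan is to verify the lemma's three assertions---$k^\perp = 1$, existence of a simple Hamiltonian path $P'$, and the $k+2$ cross-separators structure joined by an edge on $\cal{N}$---in sequence. First, $k^\perp = 1$ follows immediately from Observation~\ref{obs:odd_segs}: every $\delta$-segment strictly below $\delta_{k^\perp}$ is a horizontal side of a $\cal{W}$ cookie, so in the absence of $\cal{W}$ cookies there is only one $\delta$-segment, giving $lo = hi$. For Hamiltonicity, I would invoke Case~2 of Lemma~\ref{lem:lo_switchable} (applicable precisely when $k^\perp = 1$) to obtain that $sq_1$ is switchable, and then apply Observation~\ref{obs:sq_switch_hamiltonicity} to produce $P'$.

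The heart of the argument is locating the two new cross separators. I would track the eight edges toggled by the switch: four horizontal edges vanish (the two edges of $\delta_1$ incident to $p_5$, plus one edge on each of $Row(lo-1)$ and $Row(lo+1)$), and four vertical edges appear on $Col(-3)$, $Col(-2)$, $Col(-1)$. On $l_a = Col(-1)$ the single new edge $(v_{-1,lo}, v_{-1,lo+1})$ stitches together the vertical part of $\mu_j$ (or of the $\cal{W}$ corner cookie) from $v_{-1,0}$ down to $v_{-1,lo}$ with the east side of the $\cal{S}$ cookie in $tr$ (case~(ii) of Lemma~\ref{lem:lo_switchable}), or with the boundary node $v_{-1,m-1}$ (case~(iii) where $P$ contains $seg[\alpha, s(\eta_1)]$). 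The union is a bend-free vertical from $\cal{N}$ to $\cal{S}$---a new straight separator. Analogously on $l_z = Col(-2)$, the two new edges together with the pre-existing vertical descending from $v_{-2,0}$ and the lower vertical ascending from $v_{-2,m-1}$ form the second new cross separator. The pre-existing edge $(v_{-2,0}, v_{-1,0})$ on $\cal{N}$ joins them. Since $\eta_1,\ldots,\eta_k$ lie strictly east of $l_a$ and are untouched, the separator count becomes $k' = k + 2$.

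Simplicity then reduces to a node-by-node check on the seven vertices of $sq_1$ whose incidence patterns changed. The four nodes $v_{-1,lo}, v_{-2,lo}, v_{-1,lo+1}, v_{-2,lo+1}$ straighten out as interior points of the two new cross separators (no bends, as required for straight separators). The bend previously at $v_{-1,lo}$ (end of $\mu_j$'s vertical) and the horizontal crossing at $p_5$ on $\delta_1$ are both absorbed into the new separators. The only new bend introduced lies at $p_8 = v_{-3,lo}$, where the westward remainder of $\delta_1$ meets the newly extended vertical on $Col(-3)$; the $\cal{N}$-oriented internal path in $tr'$ has its east side replaced by the new separator but its remaining shape is still a valid internal subpath. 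The $\cal{S}$ cookie in $tr$ (under case~(ii)) is similarly absorbed into the new cross separators once its top edge is removed.

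The main obstacle is the case analysis in the simplicity step, especially the $\cal{W}$ corner-cookie sub-case in which $v_{-2,0}$ has no south edge in $P$ yet must become the northern endpoint of the new $Col(-2)$ cross separator. Resolving this sub-case requires pinning down how the structure of $P$ above $\delta_1$ (whether an $\cal{N}$ cookie, a prior corner separator $\mu_{j-1}$, or the bare boundary segment of a corner-cookie configuration) is reshaped by the switch, and verifying that the new internal subpaths respect the bend budget demanded by simplicity. This is where the delicate bookkeeping---cookie absorption, bend migration, and the compensating new bend at $p_8$---must be handled carefully to confirm that $P'$ is indeed a simple $s,t$ Hamiltonian path.
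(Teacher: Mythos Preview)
Your proposal is essentially correct and follows the same route as the paper: invoke Observation~\ref{obs:odd_segs} for $k^\perp=1$, Lemma~\ref{lem:lo_switchable} plus Observation~\ref{obs:sq_switch_hamiltonicity} for Hamiltonicity, then argue that the toggled edges complete $l_a$ and $l_z$ into straight separators and finally verify simplicity by a local case analysis around $sq_1$. The paper organises the simplicity check slightly differently: rather than auditing the seven affected vertices, it branches on the internal path of $p_6$ (either an $\mathcal{N}$ cookie or $\mu_{j-1}$) and names the resulting structure in $P'$ in each case (a new corner separator $\mu'_j$ with bend at $p_8$ in the $\mathcal{N}$-cookie case; a new $\mathcal{W}$ cookie ending on $l_b$ in the $\mu_{j-1}$ case). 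Your edge-tracking yields the same conclusions, but note one small imprecision: your claim that ``the only new bend introduced lies at $p_8$'' holds in the $\mathcal{N}$-cookie sub-case, but in the $\mu_{j-1}$ sub-case both $p_8$ and $p_9$ become new bends (the two corners of the newly created $\mathcal{W}$ cookie), so the bookkeeping there needs a slight adjustment.
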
 
\begin{proof}
Using Lemma~\ref{lem:lo_switchable} and Observation~\ref{obs:sq_switch_hamiltonicity}, $sq_1$ is switchable and switching it yields a $s,t$ Hamiltonian path $P'$. Furthermore, switching $sq_1$ gives $P'$ the two edges on $l_z$ and $l_a = Col(-1)$ (i.e., the non-edge on $l_z$ incident to the center of $sq$, and the non-edge of $P$ on $l_a$ incident to $b_j$) that were missing in $P$, but does not remove any path edges from those lines. Thus $P'$ has two cross separators in $l_z$ and $l_a$, joined by an edge on $\cal{N}$. To complete the proof, we now show that $P'$ is simple by considering the two possible internal paths of $u$ in $P$: $u$ on an $\cal{N}$ cookie and on $\mu_{j-1}$.

If $u$ lies on an  $\cal{N}$ cookie, then switching $sq_1$  creates a corner separator whose vertical segment lies one unit west of the new straight separator of $P'$ and whose horizontal segment remains in the same row.  This forms the last corner separator $\mu' _j$ in $P'$. The other corner separators are the same in $P$ and $P'$. 
The internal paths for nodes on $\mu_j$ and the $\cal{N}$ cookie with respect to $P$ are now on 
$\mu' _j$ and the new cross separator of $P'$ in $l_b$.  The internal paths for internal nodes that were on an $\cal{S}$ cookie in $tr$ with respect to $P$ are now on cross separators. No other internal paths of $P$ are changed by the switch of $sq$, so $P'$ is simple in this case. 

If $u$ lies on $\mu_{j-1}$, then switching $c_{fl}$ creates a $\cal{W}$ cookie in $p_{s,t}$ that ends on $Col(-3)=l_b$. 
Internal nodes with internal path $\mu _{j-1}$ or $\mu_j$ with respect to $P$ now have internal paths that are either the new $\cal{W}$ cookie in $P'$ or the new cross separator in $l_b$. 
Switching $sq_1$ does not change any other internal paths of $P$, which therefore remain the same in $P'$. Thus $P'$ is simple in this case.   

This completes the proof of the statement of the lemma.  
\qed
\end{proof}

Similar to the previous subsection, we define a zip operation for simple $s,t$ Hamiltonian paths that have $\cal{S}-\cal{N}$ straight separators but are not in canonical or almost canonical form. 

\begin{definition}[Zip operation $\mathcal{S}$ to $\mathcal{N}$]\label{def:NSzip_op}
Let $P$ be a simple $s,t$ Hamiltonian path visiting $\alpha$ before $\beta$, where $P$ is neither canonical nor almost canonical, and let $l_z$ and $l_a$ be Cols $-1$ and $-2$, respectively, directed $\cal{S}$ to $\cal{N}$. Then the \emph{zip} operation $Z=zip(\mathbb{G},P,l_z,l_a)$ applies switches to the squares $sq{_i}$, $1\le i \le k^{\perp}$ and $i$ odd, in order from $i=1$ to $k^{\perp}$. 
\end{definition}

We summarize the running times  of the two zip operations in the following observation. 

\begin{observation}\label{obs:constant_time}
Each square-switch can be performed in $O(1)$ time. The
zip operation $\cal{S}$ to $\cal{N}$ takes time $\Theta (m)$, and the zip operation $\cal{W}$ to $\cal{E}$ takes $\Theta(n)$ time. 
\end{observation}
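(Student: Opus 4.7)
The plan is to handle the three claims independently, each being a direct consequence of either the locality of the square-switch or the length of the zipline along which the zip is performed.

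First, I would verify that a single square-switch runs in $O(1)$ time. By Definition~\ref{def:square_switch}, the switch touches only the four path-edges contained in the two diagonally opposite cells of a fixed $2\times 2$ square, exchanging them for the four non-path-edges in those same cells. In any standard adjacency-based representation of $P$ on $\mathbb{G}$ (for example, storing, at each grid node, which of its incident grid-edges belong to $P$), this update affects a constant number of incidences at a constant number of nodes, so the operation is $O(1)$.

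Next, for the $\mathcal{W}$-to-$\mathcal{E}$ zip, I would bound the total work by the length of the zipline $l_z=\mathrm{Row}(1)$, which has $n-1$ edges. By Lemma~\ref{lem:almost_can_switchable}, the switchable squares $sq(\eta_i)$ for odd $i$ have centers at distinct positions on $l_z$ — the distinct $\eta_i$ lie in distinct columns — so there are at most $\lceil k/2 \rceil \le n/2$ of them. A single sweep along $l_z$ from $q_1$ to $q_2$ locates each successive $\eta_i$ by scanning for vertical path-edges crossing $l_z$, and performs its switch in $O(1)$ time by the first claim, for $O(n)$ time overall. For the matching $\Omega(n)$ lower bound, I would point to an almost-canonical input with $k=\Theta(n)$ straight separators — such as the path in Figure~\ref{fig:gen_transpose}(b) — which forces $\Theta(n)$ distinct square-switches and hence $\Omega(n)$ total work.

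The $\mathcal{S}$-to-$\mathcal{N}$ zip argument is entirely symmetric: $l_z=\mathrm{Col}(-2)$ has length $m-1$, the odd-indexed squares $sq_i$ lie at distinct positions along $l_z$ (one per $\delta$-segment), a single sweep processes them all in $O(m)$ time, and a worst-case input with $k^{\perp}=\Theta(m)$ $\delta$-segments supplies the $\Omega(m)$ lower bound, yielding $\Theta(m)$ in total. The only subtle point, and the main thing to check, is that the sequential sweep can locate each next switchable square in amortized $O(1)$ time without rescanning earlier portions of $l_z$; this is immediate because each switch modifies $P$ only inside its $2\times 2$ square, leaving the portion of $P$ further along $l_z$ intact, so the sweep pointer advances monotonically along the zipline.
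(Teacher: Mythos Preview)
Your proposal is correct and follows essentially the same approach as the paper: a constant-time local update for each square-switch, and a single monotone sweep along the zipline that visits each square once. The paper instantiates the data structure as bit vectors per row and column rather than your per-node incidence flags, and it describes the sweep slightly differently (read up $l_z$ to the first non-edge to locate $p_5$ of $sq_1$, then advance two units at a time while the edge above the current center is still a non-edge), but this is the same idea as your pointer advancing monotonically. One minor point worth noting: the paper's own proof does not actually argue the $\Omega$ side of the $\Theta$ bounds, whereas you do so explicitly via a worst-case instance with $\Theta(n)$ (resp.\ $\Theta(m)$) separators; in that respect your argument is more complete.
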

\begin{proof}
Paths can be stored for example as lists of bit vectors for rows and columns. Zip $Z=zip(\mathbb{G},P,l_z,l_a)$ can be performed with the following steps: i) read up  $l_z$ to find the first non-edge (e.g., the first $0$ of the bit vector for $l_z$). The upper endpoint of this non-edge is the center $p_5$ of $sq_1$, which determines the row index of $\delta_1$. ii) Switch $sq_1$ by changing in constant time the bit vectors for its sides in $l_a$, $l_z$, and $l_b$ and in the rows at and one above and below $\delta_1$.  
iii) while the grid edge above the center of the current square is a non-edge of $P$, advance 2 units along $l_z$ and repeat step ii). Output the new simple $s,t$ Hamiltonian path.
\end{proof}

\section{Reconfiguration Algorithm}\label{sec:algorithm}
In this section, we give an algorithm to reconfigure any simple path $P$ to another simple path $P'$, maintaining the simplicity of the intermediate path after each application of square-switch. The algorithm  reconfigures $P$ and $P'$ to canonical paths $\mathbb{P}$ and $\mathbb{P}'$, respectively.  If $\mathbb{P} \neq \mathbb{P}'$, i.e.,  one is $\mathcal{N}$-$\mathcal{S}$ and the other is $\mathcal{E}$-$\mathcal{W}$, the algorithm reconfigures $\mathbb{P}$ to $\mathbb{P}'$ and then reverses the steps taken from $P'$ to $\mathbb{P}'$ to complete the reconfiguration.

\smallskip

\subsection{Reconfiguring $P$ to $\mathbb{P}$ }

We give an algorithm  \textsc{ReconfigSimp} to reconfigure any simple path $P=P_{s,t}$ (straight separators assumed to be $\mathcal{N}$-$\mathcal{S}$) to a canonical path $\mathbb{P}$, where the resulting $\mathbb{P}$ might be either $\mathcal{N}$-$\mathcal{S}$ or $\mathcal{E}$-$\mathcal{W}$.  
The algorithm runs in three steps: 
\smallskip

\noindent
{\bf Step (a):} Reconfigure the initial subpath of $P_{s,t}$ up to $\eta_1$: take Column $(-2)$ as the zipline $l_z$, Column $(-1)$ as the line $l_a$, and apply zip from $\cal{S}$ to $\cal{N}$ to get another simple path $P_1$ that contains two straight separators in $l_z$ and $l_a$. If $x(\eta_1)\le 2$ in $P_1$, move to Step (b). Otherwise, take Column $(-2)$ of $P_1$ as the $l_z$, in effect shifting the previous $l_z$ two units to the $\cal{W}$.  Apply zip from $\cal{S}$ to $\cal{N}$ on $P_1$ similar to the zip on $P$.   Repeat this process until a simple path $P_2$ is reached such that  $x(\eta_1)\le 2$, and then move to Step (b).

\noindent
{\bf Step (b):} We rotate the grid graph $180^{\circ}$ about its center, and exchange the roles of $s$ and $t$ in $P_2$. 
Apply the same process as in Step (a) until  a path
$P_3$ is reached that has $x(\eta_1) \le 2$. Path $P_3$ either is a canonical path or an almost canonical path. If $P_3$ is a canonical path, then  terminate. Otherwise, $P_3$ is an almost canonical path, so move to Step (c).

\noindent
{\bf Step (c):} $P_3$  must have at least one run of unit size $\mathcal{E}$
or $\mathcal{W}$ cookies. 
Take Row $1$ as the zipline $l_z$, the $\cal{N}$ boundary as $l_a$, and apply zip from $\cal{W}$ to $\cal{E}$. Let $P_4$ be the path obtained  after the zip. Then $l_a$ and $l_z$ are segments in $P_4$. Move each of the lines $l_a$, $l_z$, and $l_b$ two rows down, and perform the next $\cal{W}$ to $\cal{E}$ zip. Repeatedly zip and move downward until reaching an $\cal{E}$--$\cal{W}$ canonical path of $\mathbb{G}$.

\smallskip

We now prove that the correctness and time complexity of \textsc{ReconfigSimp}.

\begin{theorem}
\label{thm:1complextocan}
 Algorithm \textsc{ReconfigSimp} reconfigures a simple path in a rectangular grid graph $\mathbb{G}$ to a canonical path of $\mathbb{G}$ in $O(|\mathbb{G}|)$ time by switching at most ${|\mathbb{G}|}/{2}$ squares. Each square-switch produces a simple path.
\end{theorem}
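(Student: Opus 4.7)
I would verify four properties of \textsc{ReconfigSimp}: every intermediate configuration after a square-switch is a simple $s,t$ Hamiltonian path; the algorithm terminates with a canonical path; it uses at most $|\mathbb{G}|/2$ square-switches; and it runs in $O(|\mathbb{G}|)$ time. Simplicity of intermediate paths is immediate from the lemmas of Section~\ref{sec:sq_switch_and_zip}: in Step~(a), each $\cal{S}$-to-$\cal{N}$ zip yields a simple path after every switch via Lemmas~\ref{lem:new_simple_with_W_cookies} and~\ref{lem:no_W_cookies} (covering the cases with and without $\cal{W}$ cookies), chained in order per Definition~\ref{def:NSzip_op}. Step~(b) is the $180^{\circ}$-rotation of Step~(a) and inherits these lemmas after swapping $s$ and $t$. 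Step~(c) applies $\cal{W}$-to-$\cal{E}$ zips to almost canonical paths, which Lemma~\ref{lem:almost_can} certifies produces simple intermediate paths.

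For termination and correctness, I would argue that each iteration of Step~(a) installs two new cross separators on $l_z$ and $l_a$, one of which becomes the new westmost straight separator, shifting $\eta_1$ two columns westward; hence Step~(a) halts after at most $\lceil (n-3)/2\rceil$ iterations with $x(\eta_1)\le 2$. Step~(b) is the east-side symmetric, so the resulting $P_3$ is by construction either canonical or almost canonical. Each $\cal{W}$-to-$\cal{E}$ zip in Step~(c) turns two consecutive rows into two segments joined by a boundary edge (Lemma~\ref{lem:almost_can}), leaving the subpath from $v_{0,2}$ to $t$ almost canonical on an $(m-2)\times n$ subgrid; iterating downward $\lceil (m-1)/2\rceil$ times yields an $\cal{E}$-$\cal{W}$ canonical path. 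The $O(|\mathbb{G}|)$ time bound then follows from Observation~\ref{obs:constant_time}: each square-switch is $O(1)$, each zip is a single scan over a zipline of length $O(m)$ or $O(n)$, and combined with the $O(|\mathbb{G}|)$ switch bound below, the total time is $O(|\mathbb{G}|)$.

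The hardest part is the tight constant in the switch-count bound $|\mathbb{G}|/2$. My plan is to charge each square-switch to the two diagonal cells it modifies and to verify by induction on the sequence of zips that no cell is modified by two distinct switches over the entire execution. Within any single zip the switched squares have centers two units apart on the zipline and therefore their switched cells are pairwise disjoint; between successive zips in Step~(a) the zipline shifts two columns further west, so cells modified in different iterations are disjoint, and the same holds within Steps~(b) and~(c). The delicate point, and the main technical burden, is disjointness between the vertical strips of Steps~(a)--(b) and the horizontal strips of Step~(c); I would establish this by an invariant that, whenever Step~(c) reaches a row that crosses a column strip already processed by Step~(a) or~(b), the corresponding squares are no longer switchable for the current path (they lie inside segments that were already straightened) and so the zip skips them. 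Summing the disjoint cell contributions then gives at most $\lfloor(m-1)(n-1)/2\rfloor\le |\mathbb{G}|/2$ square-switches.
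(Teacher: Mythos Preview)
Your treatment of correctness and simplicity closely tracks the paper: invoking Lemmas~\ref{lem:new_simple_with_W_cookies} and~\ref{lem:no_W_cookies} for Step~(a), symmetry for Step~(b), and Lemma~\ref{lem:almost_can} for Step~(c), together with Observation~\ref{obs:constant_time} for the time bound, is exactly what the paper does.

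The counting argument, however, takes a harder route than necessary and contains a genuine gap. You charge each switch to its two diagonally switched cells and then try to prove that all charged cells are distinct across the \emph{entire} execution, including between the vertical zips of Steps~(a)--(b) and the horizontal zips of Step~(c). Your proposed invariant---that Step~(c) ``skips'' squares lying in column strips already processed by Steps~(a)--(b) because they are ``no longer switchable''---is false. The squares switched in Step~(c) are precisely the $sq(\eta_i)$ centered on the straight separators of the almost canonical path $P_3$, and those separators occupy exactly the columns that Steps~(a) and~(b) just installed. Step~(c) does not skip them; it targets them. So cells touched in Step~(c) can and do coincide with cells touched in Steps~(a)--(b), and your global disjointness claim fails.

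The paper avoids this entirely by bounding the phases \emph{separately} and charging all four cells of each square rather than two. Within Steps~(a) and~(b) combined, the zipline shifts two columns after every zip, so the full $2\times 2$ squares are pairwise cell-disjoint, giving at most $|\mathbb{G}|/4$ switches for both steps together. Within Step~(c) the zipline shifts two rows after every zip, again giving cell-disjoint squares and at most $|\mathbb{G}|/4$ switches. Summing yields $|\mathbb{G}|/2$ with no need to compare cells across phases. Replace your two-cell charging and cross-phase invariant with this four-cell, per-phase count and the bound follows immediately.
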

\begin{proof}
For Steps (a) and (b), each of the squares on the zipline $l_z$ in Column $(-2)$ is switchable by Lemma~\ref{lem:lo_switchable}. By Lemma~\ref{lem:new_simple_with_W_cookies} and \ref{lem:no_W_cookies}, each square switching gives a simple $s,t$ path, and by Lemma~\ref{lem:no_W_cookies}, $l_z$ and $l_a$ are covered by two new straight separators after the zip. Each zip in these steps increases the number of straight separators by $2$, and we end up with a canonical or almost canonical path. 
Since the zipline is moved two columns after each zip, the squares that are switched do not overlap in cells. Therefore, at most ${|\mathbb{G}|}/{4}$ squares are switched.  In Step (c), the squares are switchable by Lemma~\ref{lem:almost_can_switchable}, and after each square-switch we obtain a simple path by Lemma~\ref{lem:almost_can}. Since no two squares contain a common cell, at most  ${|\mathbb{G}|}/{4}$ squares are switched. The  total number of square-switches is ${|\mathbb{G}|}/{2}$. 
\qed 
\end{proof}


\subsection{Reconfiguring $\mathbb{P}$ to $\mathbb{P}'$ }
This step is similar to Step (c) of  \textsc{ReconfigSimp}. If $\mathbb{P}$ is $\mathcal{N}$-$\mathcal{S}$, we grow horizontal straight separators by sweeping the zipline downward. Otherwise, we transpose the grid with the embedded path, and apply the same technique as above. 
We call this algorithm \textsc{ReconfigCanonical}. 
We now prove its correctness. 

\begin{theorem}
\label{thm:canonical}
Let $\mathbb{P}$ and $\mathbb{P}'$ be two different canonical paths of $\mathbb{G}$. Then \textsc{ReconfigCanonical}    reconfigures $\mathbb{P}$ to $\mathbb{P}'$ in $O(|\mathbb{G}|)$ time by switching at most ${|\mathbb{G}|}/{4}$ squares.
\end{theorem}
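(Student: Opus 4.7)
\end{theorem}

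\smallskip
\noindent\emph{Proof proposal.}
Without loss of generality I assume $\mathbb{P}$ is the $\mathcal{N}$-$\mathcal{S}$ canonical path (so $n$ is odd) and $\mathbb{P}'$ is the $\mathcal{E}$-$\mathcal{W}$ canonical path (so $m$ is odd); the only other non-vacuous case is handled by transposing $\mathbb{G}$ together with the embedded paths and, if needed, reversing the resulting switch sequence, exactly as indicated in the algorithm description. Under this assumption both $m$ and $n$ are odd, and in particular both $\mathbb{P}$ and $\mathbb{P}'$ exist.

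First I would verify that $\mathbb{P}$ is a degenerate instance of the almost-canonical structure from Section~\ref{subsec:almost_canonical}: every internal column carries a straight $\mathcal{N}$-$\mathcal{S}$ separator $\eta_i$, every two consecutive $\eta_i$'s are joined by a single edge on $\mathcal{N}$ or $\mathcal{S}$, and there are no cookies at all. Inspecting the proof of Lemma~\ref{lem:almost_can_switchable} with this in mind, the argument for $sq(\eta_i)$ with $i$ odd uses only (i) the connecting $\mathcal{N}$-boundary edge between $\eta_i$ and $\eta_{i+1}$ when $i<k$, and (ii) for $i\in\{1,k\}$ either a unit $\mathcal{W}$/$\mathcal{E}$ cookie or the boundary segment $seg[s,\alpha]$ (respectively $seg[\beta,t]$); the second alternative is exactly what is available for $\mathbb{P}$. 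Hence the $\mathcal{W}$-to-$\mathcal{E}$ zip with $l_z=\text{Row}\,1$, $l_a=\text{Row}\,0$ is well-defined, every intermediate path is simple by Lemma~\ref{lem:almost_can}, and after the zip Rows $0$ and $1$ are covered by two horizontal segments joined by the $\mathcal{E}$-boundary edge $(\beta,v_{n-1,1})$.

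The rest is an induction on $m$. The subpath of the post-zip path restricted to the lower sub-grid $\mathbb{G}''$ consisting of Rows $2,\ldots,m-1$ is again an $\mathcal{N}$-$\mathcal{S}$ canonical path on $\mathbb{G}''$, now with top-left corner $v_{0,2}$ and the same bottom-right corner $t$. This is because the zip edits the path only inside the three-row band of Rows $0,1,2$: below Row $2$ the original column separators are left untouched, and the zip splices their lower endpoints into Row $2$ in exactly the alternating up/down turn pattern required of a canonical path on $\mathbb{G}''$. Shifting $l_a,l_z,l_b$ two rows southward and iterating, after $(m-1)/2$ zips the path coincides with $\mathbb{P}'$, and every intermediate configuration is a simple $s,t$ Hamiltonian path.

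For the bookkeeping, each zip switches at most $(n-1)/2$ squares (one per odd-indexed internal column), and the squares switched in distinct zips lie in disjoint three-row bands, so no cell is ever switched twice; the total count is therefore at most $(m-1)(n-1)/4 \le |\mathbb{G}|/4$. By Observation~\ref{obs:constant_time} each switch costs $O(1)$ time, giving overall running time $O(|\mathbb{G}|)$. The main obstacle --- and the only place where real care is needed --- is the inductive step: one must check that after each zip the residual path on the lower sub-grid really is $\mathcal{N}$-$\mathcal{S}$ canonical (rather than some more complicated simple path), so that Lemmas~\ref{lem:almost_can_switchable} and~\ref{lem:almost_can} reapply verbatim on the smaller grid. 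This is a purely local check in the two-row strip just below the freshly created horizontal segments, essentially identifying the picture obtained after Figure~\ref{fig:gen_transpose}(f) with a shrunk-by-two-rows version of Figure~\ref{fig:gen_transpose}(b).
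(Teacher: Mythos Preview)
Your proposal is correct and follows essentially the same approach as the paper: repeatedly apply the $\mathcal{W}$-to-$\mathcal{E}$ zip of Section~\ref{subsec:almost_canonical}, shifting the zipline two rows south after each pass, and count the non-overlapping squares. The paper's own proof is a one-line pointer back to Theorem~\ref{thm:1complextocan}; you have simply unpacked that pointer, including the observation that a canonical $\mathcal{N}$-$\mathcal{S}$ path, while not literally ``almost canonical'' by definition, still satisfies the hypotheses used in the proofs of Lemmas~\ref{lem:almost_can_switchable} and~\ref{lem:almost_can}, and the sharper inductive invariant that the residual path on the lower subgrid is genuinely canonical (not merely almost canonical). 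One small wording slip: the three-row bands for consecutive zips are not disjoint (they share a row), but your actual claim that no \emph{cell} is switched twice is correct, and that is what the $|\mathbb{G}|/4$ bound needs.
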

\begin{proof}
To check whether $\mathbb{P}$ contains vertical separators, we just check in $O(1)$ time whether the first edge on $\mathbb{P}$ is vertical or horizontal. As in the proof of Theorem~\ref{thm:1complextocan} we can prove that a total of  at most ${|\mathbb{G}|}/{4}$ squares are switched, which takes  $O(|\mathbb{G}|)$ time.
\qed\end{proof}

\subsection{Main Result} 
We summarize our main algorithmic result in the following theorem.
\begin{theorem}
Let $P$ and $P'$ be two simple paths of a rectangular grid graph $\mathbb{G}$. Then $P$ can be reconfigured to $P'$ in $O(|\mathbb{G}|)$ time by at most ${5|\mathbb{G}|}/{4}$ square-switches, where each square-switch produces a simple path. 
\end{theorem}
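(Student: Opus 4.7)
The plan is to compose the two subroutines of Section~\ref{sec:algorithm} by routing the reconfiguration of $P$ to $P'$ through canonical paths. First, I would apply Algorithm \textsc{ReconfigSimp} to $P$, producing a sequence of simple $s,t$ paths $P = P_0, P_1, \ldots, P_p = \mathbb{P}$ that ends at a canonical path $\mathbb{P}$. Second, I would apply \textsc{ReconfigSimp} to $P'$, obtaining a sequence $P' = Q_0, Q_1, \ldots, Q_r = \mathbb{P}'$ ending at a canonical path $\mathbb{P}'$. Third, if $\mathbb{P} \neq \mathbb{P}'$, I would apply \textsc{ReconfigCanonical} to produce a sequence $\mathbb{P} = R_0, R_1, \ldots, R_q = \mathbb{P}'$. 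The concatenation
$$P_0, \ldots, P_p = R_0, \ldots, R_q = Q_r, Q_{r-1}, \ldots, Q_0$$
gives a reconfiguration from $P$ to $P'$. By Theorems~\ref{thm:1complextocan} and~\ref{thm:canonical}, the three constituent sequences contribute at most $|\mathbb{G}|/2$, $|\mathbb{G}|/4$, and $|\mathbb{G}|/2$ square-switches respectively, totaling at most $5|\mathbb{G}|/4$, and each phase runs in $O(|\mathbb{G}|)$ time.

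The main point to verify is that the reversed sequence $Q_r, Q_{r-1}, \ldots, Q_0$ is a valid reconfiguration by square-switches whose intermediate paths remain simple. The square-switch of Definition~\ref{def:square_switch} is involutive on its square: it exchanges four edges of $P$ in two diagonally opposite cells of $sq$ for the four non-edges there, and applying the operation a second time on the same $sq$ swaps them back. Hence if $Q_{i+1}$ was obtained from $Q_i$ by switching a square $sq$, then switching $sq$ in $Q_{i+1}$ recovers $Q_i$; a direct check shows that the conditions of Definition~\ref{def:square_switch} continue to hold in $Q_{i+1}$, since the local edge pattern has merely rotated by $90^\circ$ and the required connecting subpath in the main track is present by symmetry. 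Because the reversed sequence traverses the very same paths $Q_i$ as the forward sequence, every intermediate configuration is a simple $s,t$ Hamiltonian path.

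A small subtlety is that \textsc{ReconfigSimp} assumes the input visits $\alpha$ before $\beta$. If $P'$ visits $\beta$ before $\alpha$, I would first apply the $180^\circ$ rotation of $\mathbb{G}$ about its center (as already used in Step (b) of \textsc{ReconfigSimp}), run the algorithm on the rotated instance, and map the resulting switches back; this does not affect the switch count. I expect the main obstacle to be precisely the reversibility argument above, namely confirming that the forward square-switch of a given step is still a legitimate square-switch when viewed as an operation on the resulting path. Once this is in hand, the counting, the $O(|\mathbb{G}|)$ time bound, and the simplicity of every intermediate configuration all follow directly from the theorems of the previous subsections.
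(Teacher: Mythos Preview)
Your proposal is correct and follows essentially the same route as the paper: reconfigure $P$ to a canonical path, reconfigure $P'$ to a canonical path, bridge the two canonical paths if they differ, then reverse the second sequence, with the switch counts $|\mathbb{G}|/2 + |\mathbb{G}|/4 + |\mathbb{G}|/2$ summing to $5|\mathbb{G}|/4$. If anything, you are more explicit than the paper about the reversibility of square-switches and the $\alpha$-before-$\beta$ assumption; the paper simply asserts that ``reversing the steps'' works.
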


\begin{proof}
By Theorem~\ref{thm:1complextocan}, $P$ can be reconfigured to a canonical path $\mathbb{P}$ in $O(|\mathbb{G}|)$ time by switching ${|\mathbb{G}|}/{2}$ squares. Similarly, $P'$ can be reconfigured to a canonical path $\mathbb{P}'$ in $O(|\mathbb{G}|)$ time by at most $|\mathbb{G}|/2$ square-switches. Reconfiguring $\mathbb{P}$ to $\mathbb{P}'$ takes $O(|\mathbb{G}|)$ time  and ${|\mathbb{G}|}/{4}$ square-switches by Theorem~\ref{thm:canonical}.  If needed,  reversing the steps of reconfiguring $P'$ to $\mathbb{P}'$ takes $O(|\mathbb{G}|)$ time. Hence the total time to reconfigure $P$ to $P'$ is $O(|\mathbb{G}|)$, where at most ${5|\mathbb{G}|}/{4}$ squares are switched. All square-switches produce simple paths. 
\qed\end{proof}

We observe that reconfiguring a $\cal{N}$-$\cal{S}$ canonical path $\mathbb{P}$ to a $\cal{E}$-$\cal{W}$ canonical path $\mathbb{P'}$ requires at least ${|\mathbb{G}|}/{4}$ square-switch operations as each such operation can only produce $4$ edges of $\mathbb{P'}$.
This observation together with above theorem immediately implies the following result.

\begin{theorem}
 The Hamiltonian path graph $\cal{G}$ of $\mathbb{G}$ for simple $s,t$ Hamiltonian paths is connected with respect to the operation \emph{square-switch}, and the diameter of $\cal{G}$ is $\Theta(|\mathbb{G}|)$ and indeed at most ${5|\mathbb{G}|}/{4}$.
\end{theorem}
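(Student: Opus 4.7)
The plan is that this theorem is an immediate corollary of the work already assembled: the preceding algorithmic theorem furnishes the upper bound and hence connectivity, and the remark stated just above the theorem furnishes the matching lower bound. I would present the proof in two short paragraphs.

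For connectivity and the upper bound $\mathrm{diam}(\mathcal{G}) \leq 5|\mathbb{G}|/4$, I would observe that by the previous theorem, for any two simple $s,t$ paths $P$ and $P'$ of $\mathbb{G}$, there is a sequence of at most $5|\mathbb{G}|/4$ square-switches transforming $P$ into $P'$ in which every intermediate configuration is again a simple $s,t$ path. Consecutive paths in the sequence are by definition adjacent in $\mathcal{G}$, so the sequence is a walk in $\mathcal{G}$ of length at most $5|\mathbb{G}|/4$ joining the vertices representing $P$ and $P'$. As $P$ and $P'$ are arbitrary, $\mathcal{G}$ is connected and $\mathrm{diam}(\mathcal{G}) \leq 5|\mathbb{G}|/4$.

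For the matching $\Omega(|\mathbb{G}|)$ lower bound, I would expand the observation just preceding the theorem. Assuming for simplicity that $m$ and $n$ are both odd (so that both canonical orientations exist), let $\mathbb{P}$ be an $\mathcal{N}$-$\mathcal{S}$ canonical path and $\mathbb{P}'$ an $\mathcal{E}$-$\mathcal{W}$ canonical path. Since $\mathbb{P}$ is built from vertical segments filling columns while $\mathbb{P}'$ is built from horizontal segments filling rows, their edge sets overlap only along a boundary strip of total size $O(m+n)$; consequently $|E(\mathbb{P}') \setminus E(\mathbb{P})| = \Theta(mn) = \Theta(|\mathbb{G}|)$. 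By Definition~\ref{def:square_switch}, a single square-switch introduces exactly four new path edges into the current path, so at least $|E(\mathbb{P}') \setminus E(\mathbb{P})|/4 \geq |\mathbb{G}|/4$ square-switches are required to reconfigure $\mathbb{P}$ to $\mathbb{P}'$. Combined with the upper bound, this yields $\mathrm{diam}(\mathcal{G}) = \Theta(|\mathbb{G}|)$.

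No step is a real obstacle: the upper bound is repackaged verbatim from the previous theorem, and the lower bound is a short edge-counting argument on the symmetric difference of the two canonical paths. The only mild subtlety is the parity assumption; if exactly one of $m, n$ is odd, only one canonical orientation exists and one should instead compare the unique canonical path to a suitable almost canonical path that differs from it on $\Omega(|\mathbb{G}|)$ edges, which leaves the asymptotics unchanged.
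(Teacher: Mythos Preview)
Your proposal is correct and follows exactly the paper's approach: the theorem is stated there as an immediate consequence of the preceding algorithmic theorem (for connectivity and the $5|\mathbb{G}|/4$ upper bound) together with the observation just before it (that transforming one canonical path into the other requires at least $|\mathbb{G}|/4$ square-switches because each switch contributes only four new edges). Your write-up is in fact more detailed than the paper's, which does not spell out the edge-counting or the parity caveat.
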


\section{Conclusion and Open Problems}\label{sec:conclusion}
In this paper, we introduced a \emph{square-switch} operation, and gave a linear time algorithm that uses at most ${5|\mathbb{G}|}/{4}$ square-switches to reconfigure any simple $s,t$ Hamiltonian path in a rectangular grid graph $\mathbb{G}$ to any other such path. We ensured that each square-switch made by the algorithm yields a simple path. This result proves the connectivity of the Hamiltonian path graph $\cal{G}$ of $\mathbb{G}$ for simple paths with respect to the square-switch operation, and shows that the diameter of $\cal{G}$ is linear in the size of the grid graph $\mathbb{G}$. We defined a very restricted notion of square-switch to achieve our results.  We propose that the square-switch, or a generalization of it, can be used to solve a  reconfiguration  problems for a variety of other families of $s,t$ Hamiltonian paths in the same or other settings.

\bibliographystyle{splncs04}
\bibliography{bibl}

\end{document}